\renewcommand{\section}{\@startsection%
{section}%
{1}%
{0em}%
{1.7em}%
{1.2em}%
{\normalfont\large\centering\bfseries}}
\renewcommand{\@seccntformat}[1]%
{\csname the#1\endcsname.\hspace{0.5em}}
\renewcommand{\thesection}{\arabic{section}}
\numberwithin{equation}{section}
\renewcommand\appendix{\par
\setcounter{section}{0}%
\setcounter{subsection}{0}%
\setcounter{theorem}{0}
\setcounter{table}{0}
\setcounter{figure}{0}
\gdef\thetable{\Alph{table}}
\gdef\thefigure{\Alph{figure}}
\section*{Appendix}
\gdef\thesection{\Alph{section}}
\setcounter{section}{1}}
\newtheorem{theorem}{Theorem}[section]
\newtheorem{proposition}{Proposition}[section]
\newtheorem{lemma}{Lemma}[section]
\newtheorem{corollary}{Corollary}[section]
\theoremstyle{definition}
\newtheorem{definition}{Definition}
\newtheorem{remark}{Remark}
\newtheorem*{acknowledgments}{Acknowledgments}
\newcommand{\abs}[1]{\left|#1\right|}
\newcommand{\norm}[1]{\left\|#1\right\|}
\newcommand{\inner}[2]{\left\langle#1,#2\right\rangle}
\newcommand{\cc}[1]{\overline{#1}}
\newcommand{\reals}{\mathbb{R}}
\newcommand{\nats}{\mathbb{N}}
\newcommand{\complex}{\mathbb{C}}
\newcommand{\eval}[1]{\upharpoonright_{#1}}
\newcommand{\convergesto}[1]{\xrightarrow[#1\to\infty]{}}
\DeclareMathOperator{\dom}{dom}
\DeclareMathOperator{\Span}{span}
\begin{document}
\begin{titlepage}
  \title{Spectral analysis for linear semi-infinite\\
    mass-spring systems
\footnotetext{%
      Mathematics Subject Classification(2010):
47A75, 
47B36, 
}
\footnotetext{%
Keywords:
Jacobi matrices;
Spectrum;
Infinite mass-spring system
}
\hspace{-8mm}
\thanks{%
Research partially supported by UNAM-DGAPA-PAPIIT IN105414
}%
}
\author{
\textbf{Rafael del Rio and Luis O. Silva}
\\[6mm]
\small Departamento de F\'{i}sica Matem\'{a}tica\\[-1.6mm]
\small Instituto de Investigaciones en Matem\'aticas Aplicadas y en Sistemas\\[-1.6mm]
\small Universidad Nacional Aut\'onoma de M\'exico\\[-1.6mm]
\small C.P. 04510, M\'exico D.F.\\[1mm]
\small\texttt{delrio@iimas.unam.mx}\\[-1mm]
\small\texttt{silva@iimas.unam.mx}
}
\date{}
\maketitle
\vspace{-4mm}
\begin{center}
\begin{minipage}{5in}
  \centerline{{\bf Abstract}}\bigskip
  We study how the spectrum of a Jacobi operator changes when this
  operator is modified by a certain finite rank perturbation.  The
  operator corresponds to an infinite mass-spring system and the
  perturbation is obtained by modifying one interior mass and one
  spring of this system.  In particular, there are detailed results of
  what happens in the spectral gaps and which eigenvalues do not move
  under the modifications considered. These results were obtained by a
  new tecnique of comparative spectral analysis and they generalize
  and include previous results for finite and infinite Jacobi
  matrices.
\end{minipage}
\end{center}
\thispagestyle{empty}
\end{titlepage}
\section{Introduction}
\label{sec:intro}
Denote by $l_{\rm fin}(\mathbb{N})$ the linear space of complex
sequences having a finite number of non-zero elements. In the Hilbert
space $l_2(\mathbb{N})$, let $J_0$ be the operator with
$\dom(J_0)=l_{\rm fin}(\mathbb{N})$ such that, for every
$f=\{f_k\}_{k=1}^\infty$ in $l_{\rm fin}(\mathbb{N})$,
\begin{subequations}
\label{eq:main-recurrence}
\begin{align}
  \label{eq:initial-defining}
  (J_0f)_1&:= q_1 f_1 + b_1 f_2\,,\\
  \label{eq:recurrence-defining}
  (J_0f)_k&:= b_{k-1}f_{k-1} + q_k f_k + b_kf_{k+1}\,,
  \quad k \in \mathbb{N} \setminus \{1\},
\end{align}
\end{subequations}
where $q_n\in\mathbb{R}$ and $b_n>0$ for any $n\in\mathbb{N}$. The
operator $J_0$ is symmetric and therefore one can consider the operator
$\cc{J_0}$ being  its closure. For the symmetric operator $\cc{J_0}$,
one of the following two possibilities for the deficiency indices
holds \cite[Chap.\,4,\,Sec.\,1.2]{MR0184042}:
\begin{subequations}
  \label{eq:deficiency-indices}
  \begin{align}
     n_+(\cc{J_0})&=n_-(\cc{J_0})=1\,,\label{eq:one-one}\\
     n_+(\cc{J_0})&=n_-(\cc{J_0})=0\,.\label{eq:nil-nil}
  \end{align}
\end{subequations}
Fix a self-adjoint extension of $J_0$ and denote it by $J$. Thus, in
view of the possible values of the deficiency indices, the von Neumann
extension theory tells us that either $J$ is a proper closed symmetric
extension of $\cc{J_0}$ or $J=\overline{J_0}$. According to the
definition of the matrix representation for an unbounded symmetric
operator \cite[Sec. 47]{MR1255973}, $\overline{J_0}$ is the operator
whose matrix representation with respect to the canonical basis
$\{\delta_n\}_{n=1}^\infty$ in $l_2(\mathbb{N})$ is
\begin{equation}
  \label{eq:jm-0}
  \begin{pmatrix}
    q_1 & b_1 & 0  &  0  &  \cdots
\\[1mm] b_1 & q_2 & b_2 & 0 & \cdots \\[1mm]  0  &  b_2  & q_3  &
b_3 &  \\
0 & 0 & b_3 & q_4 & \ddots\\ \vdots & \vdots &  & \ddots
& \ddots
  \end{pmatrix}\,.
\end{equation}
The $k$-th entry of $\delta_n$ is 1 if $k=n$ and 0 if $k\ne n$.

Fix $n\in\nats$ and consider, along with the self-adjoint operator
$J$, the operator
\begin{equation}
\label{eq:def-tilde-j}
\begin{split}
  \widetilde{J}_n=J &+
  [q_n(\theta^2-1)+\theta^2h]\inner{\delta_n}{\cdot}\delta_n \\
  &+  b_n(\theta-1)(\inner{\delta_n}{\cdot}\delta_{n+1} +
  \inner{\delta_{n+1}}{\cdot}\delta_n) \\
  &+ b_{n-1}(\theta-1)(\inner{\delta_{n-1}}{\cdot}\delta_{n} +
  \inner{\delta_{n}}{\cdot}\delta_{n-1})
  \,,\quad \theta>0\,,
  \quad h\in\mathbb{R}\,,
\end{split}
\end{equation}
where it has been assumed that $b_0=0$. Clearly, $\widetilde{J}_n$ is a
self-adjoint extension of the operator whose matrix representation
with respect to the canonical basis in $l_2(\mathbb{N})$ is a Jacobi
matrix obtained from (\ref{eq:jm-0}) by modifying the entries
$b_{n-1},q_n,b_n$. For instance, if $n>2$, $\widetilde{J}_n$ is a
selfadjoint extension (possibly not proper) of the operator whose
matrix representation is
\begin{equation}
  \label{eq:jm-1}
  \begin{pmatrix}
q_1 & b_1 & 0 & 0 & 0 & 0 & \cdots \\[1mm]
b_1 & \ddots & \ddots & 0 & 0 & 0 & \cdots \\[1mm]
0  &  \ddots  & q_{n-1} & \theta b_{n-1} & 0 & 0 & \cdots\\
0 & 0 & \theta b_{n-1} & \theta^2(q_n+h) & \theta b_n & 0 & \cdots\\
0 & 0 & 0 & \theta b_{n} & q_{n+1} & \theta b_{n+1} & \\
0 & 0 & 0 & 0 & b_{n+1} & q_{n+2} & \ddots\\
\vdots & \vdots & \vdots & \vdots & & \ddots & \ddots
  \end{pmatrix}\,.
\end{equation}
Note that $\widetilde{J}_n$ is obtained from $J$ by a rank-three
perturbation when $n>1$, and by a rank-two perturbation when $n=1$.

The particular kind of perturbation given in (\ref{eq:def-tilde-j})
arises in the analysis of semi-infinite mass-spring systems. It is
known \cite{MR2998707,see-later-mr} that, within the regime of
validity of the Hooke law, the system in Fig.~1,
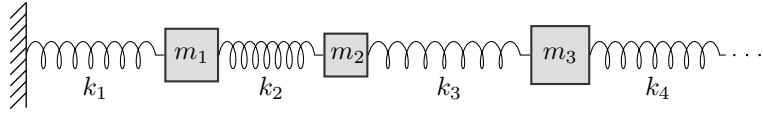
\begin{figure}[h]
\begin{center}
\begin{tikzpicture}
  [mass1/.style={rectangle,draw=black!80,fill=black!13,thick,inner sep=0pt,
   minimum size=7mm},
   mass2/.style={rectangle,draw=black!80,fill=black!13,thick,inner sep=0pt,
   minimum size=5.7mm},
   mass3/.style={rectangle,draw=black!80,fill=black!13,thick,inner sep=0pt,
   minimum size=7.7mm},
   wall/.style={postaction={draw,decorate,decoration={border,angle=-45,
   amplitude=0.3cm,segment length=1.5mm}}}]
  \node (mass3) at (7.1,1) [mass3] {\footnotesize$m_3$};
  \node (mass2) at (4.25,1) [mass2] {\footnotesize$\,m_2$};
  \node (mass1) at (2.2,1) [mass1] {\footnotesize$m_1$};
\draw[decorate,decoration={coil,aspect=0.4,segment
  length=2.1mm,amplitude=1.8mm}] (0,1) -- node[below=4pt]
{\footnotesize$k_1$} (mass1);
\draw[decorate,decoration={coil,aspect=0.4,segment
  length=1.5mm,amplitude=1.8mm}] (mass1) -- node[below=4pt]
{\footnotesize$k_2$} (mass2);
\draw[decorate,decoration={coil,aspect=0.4,segment
  length=2.5mm,amplitude=1.8mm}] (mass2) -- node[below=4pt]
{\footnotesize$k_3$} (mass3);
\draw[decorate,decoration={coil,aspect=0.4,segment
  length=2.1mm,amplitude=1.8mm}] (mass3) -- node[below=4pt]
{\footnotesize$k_4$} (9.3,1);
\draw[line width=.8pt,loosely dotted] (9.4,1) -- (9.8,1);
\draw[line width=.5pt,wall](0,1.7)--(0,0.3);
\end{tikzpicture}
\end{center}
\caption{Semi-infinite mass-spring system}\label{fig:1}
\end{figure}
with masses
$\{m_j\}_{j=1}^\infty$ and spring constants $\{k_j\}_{j=1}^\infty$, is
modeled by the Jacobi operator $J$ such that
\begin{equation*}
q_j = -\frac{k_{j+1}+k_j}{m_j}\,, \qquad
b_j=\frac{k_{j+1}}{\sqrt{m_j m_{j+1}}}\,,
\qquad j\in\mathbb{N}
\end{equation*}
(see \cite{MR2102477,mono-marchenko} for an explanation of the
deduction of these formulae in the finite case). Alternatively, the
system in Fig.~1 can be interpreted as a one dimensional harmonic
crystal \cite[Sec.\,1.5]{MR1711536}. The modified mass-spring system
corresponding to the perturbed operator $\widetilde{J}_n$ is obtained
by adding $\Delta m=m_n(\theta^{-2}-1)$ to the $n$-th mass and $\Delta
k=-hm_n$ to the $n$-th spring constant (see Fig. 2).
\begin{figure}[h]
\begin{center}
\begin{tikzpicture}
  [mass1/.style={rectangle,draw=black!80,fill=black!13,thick,inner sep=0pt,
   minimum size=10mm},
   mass2/.style={rectangle,draw=black!80,fill=black!13,thick,inner sep=0pt,
   minimum size=5.3mm},
   mass3/.style={rectangle,draw=black!80,fill=black!13,thick,inner sep=0pt,
   minimum size=9mm},
   dmass/.style={rectangle,draw=black!80,fill=black!13,thick,inner sep=0pt,
   minimum size=4.8mm},
   wall/.style={postaction={draw,decorate,decoration={border,angle=-45,
   amplitude=0.3cm,segment length=1.5mm}}}]
  \node (mass3) at (7.5,1) [mass3] {\footnotesize$m_{n+1}$};
  \node (mass2) at (4.65,1) [mass2] {\footnotesize$\,m_n$};
  \node (mass1) at (2.6,1) [mass1] {\footnotesize$m_{n-1}$};
  \node (dmass) at (4.65,1.5) [dmass] {\scriptsize$\,\Delta m\,$};
\draw[line width=.8pt,loosely dotted] (0,1) -- (0.4,1);
\draw[decorate,decoration={coil,aspect=0.4,segment
  length=1.3mm,amplitude=1.2mm}] (3.1,1.5) -- node[above=4pt]
{\footnotesize$\Delta k$} (dmass);
\draw[decorate,decoration={coil,aspect=0.4,segment
  length=2.1mm,amplitude=1.8mm}] (0.4,1) -- node[below=4pt]
{\footnotesize$k_{n-1}$}  (mass1);
\draw[decorate,decoration={coil,aspect=0.4,segment
  length=1.5mm,amplitude=1.8mm}] (mass1) -- node[below=4pt]
{\footnotesize$k_{n}$} (mass2);
\draw[decorate,decoration={coil,aspect=0.4,segment
  length=2.5mm,amplitude=1.8mm}] (mass2) -- node[below=4pt]
{\footnotesize$k_{n+1}$} (mass3);
\draw[decorate,decoration={coil,aspect=0.4,segment
  length=2.1mm,amplitude=1.8mm}] (mass3) -- node[below=4pt]
{\footnotesize$k_{n+2}$} (9.7,1);
\draw[line width=.8pt,loosely dotted] (9.8,1) -- (10.2,1);
\end{tikzpicture}
\end{center}
\caption{Perturbed semi-infinite mass-spring system ($n\ge 2$)}\label{fig:2}
\end{figure}
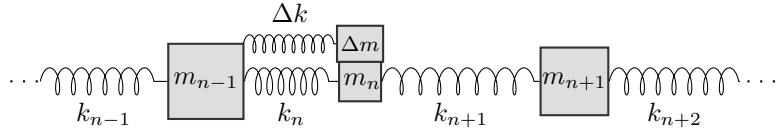

This work carries out a comparative spectral analysis of the operators
$J$ and $\widetilde{J}_n$. This analysis has various distinctive
features related to the kind of perturbation under consideration
(\ref{eq:def-tilde-j}). As mentioned above, the perturbation has a
physical motivation and could be of interest in some
applications. An interesting aspect of the perturbation considered
here is that the comparative spectral analysis of $J$ and
$\widetilde{J}_n$ is susceptible of being treated by a method that
involves the use of quotients of Green functions (see
(\ref{eq:M-definition})) for deriving a master equation (see
(\ref{eq:master})). This method yields results that cannot be obtained
by classical perturbation methods and, remarkably, there is no need of
any general assumption on the spectrum of $J$. In particular, there is
no need to assume that $J$ has discrete spectrum.

It is worth remarking that the perturbation given in
(\ref{eq:def-tilde-j}) has not been studied for the case of
semi-infinite Jacobi matrices. The modification of the spectrum of a
Jacobi operator as a result of a rank-one perturbation is well
understood and has been amply studied (see \cite{MR2305710} and
references therein), however there is scarce
literature treating other kinds of finite rank perturbations.

The main results of this note (Theorems~\ref{thm:interlacing},
\ref{thm:interlacing-infinity}, and \ref{thm:gamma-in-both}) contain
as a particular case all previously known results on the matter
(\cite[Thm.\,2]{MR2915295} and
\cite[Props.\,3.1,\,3.2]{MR2998707}). We point out that the techniques
and ideas developed in this work allow to tackle the corresponding
generalizations of the inverse spectral analysis carried out in
\cite{MR2915295} and \cite{MR2998707}. This is the subject of a
forthcoming paper.

\section{Green functions for Jacobi operators}
\label{sec:green-functions}

Let us consider the following system of difference equations
\begin{subequations}
  \label{eq:spectral-equation}
\begin{align}
  \label{eq:initial-spectral}
   q_1 f_1 + b_1 f_2&=zf_1\\
  \label{eq:recurrence-spectral}
  b_{k-1}f_{k-1} + q_k f_k + b_kf_{k+1}&=zf_k
  \quad k \in \mathbb{N} \setminus \{1\},
\end{align}
\end{subequations}
Clearly, by setting $f_1=1$, the solution of
(\ref{eq:spectral-equation}) can be found uniquely by recurrence. This
solution is an infinite sequence that will be denoted by
$\pi(z)$. Note that $\pi_k(z)$ is a polynomial of degree
$k-1$. Alongside this sequence, we define the sequence $\theta(z)$
as the solution of (\ref{eq:recurrence-spectral}) after
setting $f_1=0$ and $f_2=b_1^{-1}$. Thus, $\theta_k(z)$ is a
polynomial of degree $k-2$. The elements of the sequence
$\pi(z)$, respectively $\theta(z)$, are referred to as the polynomials of
the first, respectively second, kind associated with the matrix
(\ref{eq:jm-0}). By comparing (\ref{eq:main-recurrence}) with
(\ref{eq:spectral-equation}), one concludes that
$\pi(z)\in\ker(J_0^*-zI)$ if and only if $\pi(z)$ is an element of
$l_2(\nats)$. Of course, in particular, $\pi(z)\in\ker(J-zI)$, if and
only if $\pi(z)\in\dom(J)$.

It is easy to verify, directly from the definition of the operator $J$
(see (\ref{eq:main-recurrence})), that
\begin{equation}
  \label{eq:delta-k-through-delta-1}
  \delta_k=\pi_k(J)\delta_1\quad\forall k\in\nats\,.
\end{equation}
This implies that $J$ is simple and $\delta_1$ is a cyclic vector (see
\cite[Sec. 69]{MR1255973}). Therefore, if one defines the spectral
function as
\begin{equation}
  \label{eq:spectral-function-def}
  \rho(t):=\inner{\delta_1}{E(t)\delta_1}\,,\quad t\in\reals\,,
\end{equation}
where $E$ is the resolution of the identity given by the spectral
theorem, then, by \cite[Sec. 69, Thm. 2]{MR1255973}), one has a unitary
map $\Phi:L_2(\reals,\rho)\to l_2(\nats)$ such that $\Phi^{-1}J\Phi$
is the multiplication by the independent variable defined in its
maximal domain. This is the canonical representation of $J$. We note
that, on the basis of \cite[Sec. 69, Thm. 2]{MR1255973}), it follows from
(\ref{eq:delta-k-through-delta-1}) that $\pi_k\in L_2(\reals,\rho)$
for all $k\in\nats$, that is, all moments of $\rho$ exists (see
also \cite[Thm.\,4.1.3]{MR0184042}), and
\begin{equation*}
  \Phi\pi_k=\delta_k\quad\forall k\in\nats\,.
\end{equation*}

In what follows, $\sigma(J)$, $\sigma_p(J)$, and $\sigma_{ess}(J)$ denote
the spectrum, the point spectrum (eigenvalues), and the essential
spectrum (in this case, accumulation points of $\sigma(J)$) of $J$,
respectively.

Now, consider the Weyl $m$-function, given by
\begin{equation}
  \label{eq:weyl-function}
  m(z):=\inner{\delta_1}{(J-z I)^{-1}\delta_1}\,,\qquad z\not\in\sigma(J)\,.
\end{equation}

By using the canonical representation, it immediately follows from the
definition that
\begin{equation*}
  m(z)=\int_{\mathbb{R}}
  \frac{d\rho(t)}{t-z}\,.
\end{equation*}
Thus, by the Nevanlinna representation theorem (see
\cite[Thm.\,5.3]{MR1307384}), $m(z)$ is a Herglotz function.

Due to the inverse Stieltjes transform, one uniquely recovers
$\rho$ from $m$, so $\rho$ and $m$ are in one-to-one correspondence.

For every $z\in\complex\setminus\sigma(J)$, let us consider the element
$\psi(z)$ in $l_2(\nats)$ defined by
\begin{equation}
  \label{eq:psi-vector}
  \psi(z):=(J-zI)^{-1}\delta_1\,.
\end{equation}
It is known \cite[Chap.\,7 Eq.\,1.39]{MR0222718} that for every
$z\in\complex\setminus\sigma(J)$ there exists a unique complex number
$m(z)$ such that
\begin{equation}
  \label{eq:weyl-solution}
  \psi(z)=m(z)\pi(z) +\theta(z)\,.
\end{equation}
The overlap with (\ref{eq:weyl-function}) in the notation is not an
accident, the number $m(z)$ is actually the value of the Weyl
$m$-function at $z$.
\begin{definition}
  \label{def:submatrices}
  For a subspace $\mathcal{G}\subset\l_2(\nats)$, let
  $P_{\mathcal{G}}$ be the orthogonal projection onto
  $\mathcal{G}$. Also, define
  $\mathcal{G}^\perp:=\{\phi\in
  l_2(\nats):\inner{\phi}{\psi}=0\,\forall \psi\in\mathcal{G}\}$ and
  the subspace
$\mathcal{F}_n:=\Span\{\delta_k\}_{k=1}^n$.
  For the operator $J$ given in the Introduction, consider the
  operators
  \begin{equation*}
    J_n^+:=P_{\mathcal{F}_n^\perp}J\eval{\mathcal{F}_n^\perp}\,,\qquad
    J_n^-:=P_{\mathcal{F}_{n-1}}J\eval{\mathcal{F}_{n-1}}
  \end{equation*}
 for any $n\in\nats\setminus\{1\}$. Here, we have used the notation
 $J\eval{\mathcal{G}}$ for the restriction of $J$ to the set
 $\mathcal{G}$, that is,
 $\dom(J\eval{\mathcal{G}})=\dom(J)\cap\mathcal{G}$. Consider also the corresponding $m$-Weyl functions
 \begin{equation*}
   m_n^+(z):=\inner{\delta_{n+1}}{(J_n^+-zI)^{-1}\delta_{n+1}}\,,\qquad
    m_n^-(z):=\inner{\delta_{n-1}}{(J_n^--zI)^{-1}\delta_{n-1}}\,.
 \end{equation*}
\end{definition}
Note that $J_n^+$ is a selfadjoint extension of the operator whose
matrix representation with respect to the basis
$\{\delta_k\}_{k=n+1}^\infty$ of the space
$\mathcal{F}_n^\perp$ is the matrix
(\ref{eq:jm-0}) with the first $n$ rows and $n$ columns
removed. Moreover, when $J_0$ is not essentially selfadjoint, $J_n^+$
has the same boundary conditions at infinity as the operator $J$. Note
that $J_n^-$ is an operator in an $n-1$-dimensional space whose matrix
representation consists of the first $n-1$ columns and $n-1$ rows of
the matrix~(\ref{eq:jm-0}).

The following result can be found in \cite[Eqs. 2.10,
2.16]{MR1616422}. An alternative proof is
provided below.
\begin{proposition}
  \label{prop:m-plus-minus-formulae}
For any $n\in\nats\setminus\{1\}$, one has
  \begin{equation*}
     m_n^+(z)=-\frac{\psi_{n+1}(z)}{b_n\psi_n(z)}\,,\qquad
     m_n^-(z)=-\frac{\pi_{n-1}(z)}{b_{n-1}\pi_n(z)}\,.
  \end{equation*}
\end{proposition}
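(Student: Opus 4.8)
The plan is to obtain each formula by solving the relevant resolvent equation as a three-term difference equation and matching the solution against the sequences $\pi(z)$ and $\psi(z)$ already at our disposal. Consider first $m_n^-$. Since $J_n^-$ acts on the finite-dimensional space $\mathcal{F}_{n-1}$, I would solve $(J_n^--zI)u=\delta_{n-1}$ for $u=\{u_k\}_{k=1}^{n-1}$, so that by definition $m_n^-(z)=\inner{\delta_{n-1}}{(J_n^--zI)^{-1}\delta_{n-1}}=u_{n-1}$. The rows of $J_n^-$ with index $1,\dots,n-2$ reproduce the first-kind relation \eqref{eq:initial-spectral} at $k=1$ and the homogeneous recurrence \eqref{eq:recurrence-spectral} at $k=2,\dots,n-2$, which are exactly the relations defining $\pi(z)$; hence the solution space is one-dimensional and $u_k=c\,\pi_k(z)$ for $k=1,\dots,n-1$ and some constant $c$. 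The last row of $J_n^-$ has lost the coupling $b_{n-1}$, so it reads $b_{n-2}u_{n-2}+(q_{n-1}-z)u_{n-1}=1$. Substituting $u_k=c\,\pi_k$ and using the recurrence for $\pi$ at index $n-1$, namely $b_{n-2}\pi_{n-2}+(q_{n-1}-z)\pi_{n-1}=-b_{n-1}\pi_n$, I obtain $-c\,b_{n-1}\pi_n=1$, that is $c=-(b_{n-1}\pi_n)^{-1}$, and therefore $m_n^-(z)=c\,\pi_{n-1}=-\pi_{n-1}(z)/(b_{n-1}\pi_n(z))$. (For $n=2$ the space is one-dimensional and the single inhomogeneous row, combined with \eqref{eq:initial-spectral}, gives the same conclusion directly.)

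For $m_n^+$ I would argue in the same spirit, the essential difference being that the underlying space is now infinite-dimensional. Set $\psi^+:=(J_n^+-zI)^{-1}\delta_{n+1}\in\mathcal{F}_n^\perp$, so that $m_n^+(z)=\psi^+_{n+1}$. The equations $((J_n^+-zI)\psi^+)_k=0$ for $k\ge n+2$ are the homogeneous recurrence, while the first row $k=n+1$, which has lost the coupling $b_n$, reads $(q_{n+1}-z)\psi^+_{n+1}+b_{n+1}\psi^+_{n+2}=1$. On the other hand, since $(J-zI)\psi=\delta_1$, the vector $\psi$ satisfies the homogeneous recurrence at every $k\ge2$, so the tail $\{\psi_k\}_{k\ge n+1}$ solves the same recurrence for $k\ge n+2$ and lies in $l_2$. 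Both $\psi^+$ and this tail are solutions of the tail recurrence satisfying the self-adjoint boundary condition imposed at infinity; since $J_n^+$ inherits the boundary condition at infinity of $J$ (the remark following Definition~\ref{def:submatrices}) and the deficiency indices satisfy the dichotomy \eqref{eq:deficiency-indices}, such a solution is unique up to a scalar, so $\psi^+_k=c\,\psi_k$ for $k\ge n+1$. To fix $c$ I would use the $(n+1)$-th equation for $\psi$, namely $b_n\psi_n+(q_{n+1}-z)\psi_{n+1}+b_{n+1}\psi_{n+2}=0$, which gives $(q_{n+1}-z)\psi_{n+1}+b_{n+1}\psi_{n+2}=-b_n\psi_n$; substituting $\psi^+=c\,\psi$ into the inhomogeneous first row yields $-c\,b_n\psi_n=1$, whence $c=-(b_n\psi_n)^{-1}$ and $m_n^+(z)=\psi^+_{n+1}=-\psi_{n+1}(z)/(b_n\psi_n(z))$.

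The main obstacle is precisely the proportionality step for $m_n^+$: in the limit-circle case $n_\pm(\cc{J_0})=1$ there are two $l_2$ solutions, so square-integrability alone does not single out $\psi$ up to a scalar, and one must genuinely use that the self-adjoint boundary condition at infinity defining $J_n^+$ is the one inherited from $J$, so that $\psi^+$ and $\{\psi_k\}_{k\ge n+1}$ lie in the same one-dimensional solution space. Carefully identifying the behaviour of $\dom(J_n^+)$ at infinity with that of $\dom(J)$ is the only delicate point; everything else reduces to elementary manipulation of the three-term recurrence. One should also record that a zero of $\psi_n(z)$ or of $\pi_n(z)$ corresponds to a pole of the respective $m$-function, in accordance with the rational expressions claimed.
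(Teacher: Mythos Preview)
Your argument is correct and complete. It differs from the paper's in organization rather than in essential content. For $m_n^+$ the paper proceeds inductively: it sets $\psi^{(n)}:=(J_n^+-zI)^{-1}\delta_{n+1}$, verifies directly that $\psi_k^{(1)}=-\psi_{k+1}/(b_1\psi_1)$, and then strips one row and column at a time via the recursion $\psi_k^{(n)}=-\psi_{k+1}^{(n-1)}/(b_n\psi_1^{(n-1)})$, telescoping down to the base case. For $m_n^-$ the paper simply cites \cite[Lem.\,2]{MR2915295}. You instead go straight for general $n$: you recognize that the resolvent equation on $\mathcal{F}_{n-1}$ forces proportionality to $\pi$, and that on $\mathcal{F}_n^\perp$ it forces proportionality to the tail of $\psi$, then fix the constant from the single inhomogeneous row. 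Your route has the virtue of making the structural reason for the formulae transparent (the $l_2$ Weyl solution for $J_n^+$ is the tail of the one for $J$), and you are explicit about the one genuine subtlety---that in the limit-circle case square-integrability alone is not enough and one must invoke that $J_n^+$ inherits the boundary condition at infinity from $J$. The paper's inductive proof uses the very same fact, but only at the base step and hidden inside ``one verifies''; its advantage is that once the $n=1$ case is accepted, the telescoping is purely algebraic and requires no further appeal to domain considerations.
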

\begin{proof}
  Define
  \begin{equation*}
    \psi^{(n)}(z):=(J_n^+-zI)^{-1}\delta_{n+1}\,.
  \end{equation*}
Then one verifies that
\begin{equation*}
  \psi_k^{(1)}=-\frac{\psi_{k+1}}{b_1\psi_1}
\end{equation*}
and, after further computations, that
\begin{equation*}
   \psi_k^{(n)}=-\frac{\psi_{k+1}^{(n-1)}}{b_n\psi_1^{(n-1)}}\,.
\end{equation*}
Therefore,
\begin{equation*}
  \psi_k^{(n)}=-\frac{\psi_{k+1}^{(n-1)}}{b_n\psi_1^{(n-1)}}=
  -\frac{\psi_{k+2}^{(n-2)}}{b_n\psi_2^{(n-2)}}=\dots=
  \frac{\psi_{k+n}}{b_n\psi_n}\,.
\end{equation*}
From this, since $m_n^+(z)=\psi_1^{(n)}$, the first identity of the
assertion follows.  The second formula is the statement of
\cite[Lem.\,2]{MR2915295} and a proof for this is provided there.
\end{proof}
It immediately follows from the previous proposition that the
following holds
\begin{corollary}
  \label{cor:pi-k-zero-eigenvalue-j-}
  Fix $n\in\nats\setminus\{1\}$. The real number $x$ is a zero of the
  polynomial $\pi_n(\cdot)$ if and only if $x$ is an eigenvalue of
  $J_n^-$.
\end{corollary}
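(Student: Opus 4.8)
The plan is to extract the zeros of $\pi_n$ from the pole structure of the finite-dimensional Weyl function $m_n^-$ furnished by Proposition~\ref{prop:m-plus-minus-formulae}. Since $J_n^-$ acts on the $(n-1)$-dimensional space $\mathcal{F}_{n-1}$, it is a self-adjoint matrix; writing its spectral decomposition as $J_n^-=\sum_{j}x_j\inner{u_j}{\cdot}u_j$ over its eigenvalues $x_j$ with orthonormal eigenvectors $u_j$, one obtains
\begin{equation*}
  m_n^-(z)=\sum_{j}\frac{\abs{\inner{\delta_{n-1}}{u_j}}^2}{x_j-z}\,.
\end{equation*}
Hence the poles of $m_n^-$ are exactly those eigenvalues $x_j$ for which $\inner{\delta_{n-1}}{u_j}\neq 0$.

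First I would verify that no eigenvalue drops out, i.e.\ that $\inner{\delta_{n-1}}{u_j}\neq 0$ for every $j$. Because $J_n^-$ is tridiagonal with positive subdiagonal entries $b_1,\dots,b_{n-2}$, no eigenvector can have a vanishing last component: if the $(n-1)$-th entry of $u_j$ were zero, the last row of $J_n^-u_j=x_ju_j$ would force the $(n-2)$-th entry to vanish, and propagating this relation toward the first coordinate would annihilate $u_j$ entirely. Since $\inner{\delta_{n-1}}{u_j}$ is precisely that last component, every eigenvalue of $J_n^-$ is a genuine pole of $m_n^-$.

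It remains to match these poles with the zeros of $\pi_n$. By Proposition~\ref{prop:m-plus-minus-formulae} one has $m_n^-(z)=-\pi_{n-1}(z)/\big(b_{n-1}\pi_n(z)\big)$, so a pole of $m_n^-$ is a zero of $\pi_n$ not cancelled by the numerator $\pi_{n-1}$. But $\pi_{n-1}$ and $\pi_n$ have no common zero: if $\pi_{n-1}(x)=\pi_n(x)=0$, then reading the three-term recurrence~(\ref{eq:recurrence-spectral}) with $k=n-1,n-2,\dots,2$ and using the positivity of $b_1,\dots,b_{n-2}$ would force $\pi_{n-2}(x)=\dots=\pi_1(x)=0$, contradicting $\pi_1\equiv 1$. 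Thus every zero of $\pi_n$ survives as a pole of $m_n^-$, and chaining the equivalences ``$x$ is a zero of $\pi_n$ $\iff$ $x$ is a pole of $m_n^-$ $\iff$ $x$ is an eigenvalue of $J_n^-$'' closes the argument.

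The only delicate points here are the two non-cancellation statements; I expect these, rather than the spectral expansion, to be the crux. A uniform way to dispatch both at once is the classical identity $\pi_n(z)=\det(zI-J_n^-)/(b_1\cdots b_{n-1})$, which follows by checking that $\det(zI-J_n^-)$ obeys the same three-term recurrence and initial data as $(b_1\cdots b_{n-1})\pi_n(z)$. This exhibits the zeros of $\pi_n$ as the eigenvalues of $J_n^-$ directly, multiplicities included, and bypasses the residue and coprimality discussions entirely; I would keep it in reserve as the cleanest route if the pole-counting argument gets cumbersome.
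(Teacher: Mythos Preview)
Your argument is correct, and it is essentially the approach the paper intends: the corollary is stated with no separate proof beyond ``It immediately follows from the previous proposition,'' i.e.\ from the formula $m_n^-(z)=-\pi_{n-1}(z)/(b_{n-1}\pi_n(z))$ of Proposition~\ref{prop:m-plus-minus-formulae}. What you have written is exactly the content one must supply to make that ``immediately'' rigorous---the cyclicity of $\delta_{n-1}$ for $J_n^-$ (equivalently, the nonvanishing last component of each eigenvector) and the coprimality of $\pi_{n-1}$ and $\pi_n$---so your first route matches the paper's unstated reasoning.

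Your reserve route via the characteristic-polynomial identity $\pi_n(z)=\det(zI-J_n^-)/(b_1\cdots b_{n-1})$ is a genuinely different and cleaner argument: it bypasses the Weyl function entirely and does not require Proposition~\ref{prop:m-plus-minus-formulae}. The paper does not take this path (it prefers to route everything through $m_n^-$), but your version has the advantage of being self-contained and of handling multiplicities automatically; the cost is that one must verify the determinant recurrence, which is a standard but separate computation.
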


There are various formulae for the matrix entries of the matrix
representation of $(J-z)^{-1}$ with respect to the canonical
basis. The one provided below is suitable for us
(cf. \cite[Eq.\,2.8]{MR1616422}).
\begin{proposition}
  \label{prop:resolvent-matrix}
  For any $z\in\complex\setminus\sigma(J)$ and $j,k\in\nats$, the
  following holds.
  \begin{equation}
    \label{eq:resolvent-matrix-entries}
    \inner{\delta_j}{(J-zI)^{-1}\delta_k}=
    \begin{cases}
      \pi_j(z)\psi_k(z) & \text{if } j\le k\,,\\
      \psi_j(z)\pi_k(z) & \text{otherwise.}
    \end{cases}
  \end{equation}
\end{proposition}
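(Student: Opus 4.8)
The plan is to fix $k\in\nats$ and $z\in\complex\setminus\sigma(J)$ and to prove the formula by identifying its right-hand side, read as a sequence in the index $j$, with the $k$-th column $(J-zI)^{-1}\delta_k$ of the resolvent. Concretely, I would define the sequence $F=\{F_j\}_{j=1}^\infty$ by
\begin{equation*}
  F_j:=
  \begin{cases}
    \pi_j(z)\psi_k(z) & \text{if } j\le k,\\
    \psi_j(z)\pi_k(z) & \text{if } j>k,
  \end{cases}
\end{equation*}
and show that $F=(J-zI)^{-1}\delta_k$; taking the $j$-th coordinate then gives the claim. Since $z\notin\sigma(J)$, the map $J-zI$ is a bijection from $\dom(J)$ onto $l_2(\nats)$, so it suffices to verify two things: that $F\in\dom(J)$, and that $(J-zI)F=\delta_k$. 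It is convenient to record the overlapping representations $F_j=\psi_k(z)\pi_j(z)$ valid for $j\le k$ and $F_j=\pi_k(z)\psi_j(z)$ valid for $j\ge k$, which agree at $j=k$.

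The computational heart of the argument is a Wronskian identity. For two sequences $u,v$ set $W_j(u,v):=b_j(u_jv_{j+1}-u_{j+1}v_j)$. A direct substitution shows that if $u$ and $v$ each satisfy the homogeneous relation $b_{j-1}f_{j-1}+(q_j-z)f_j+b_jf_{j+1}=0$ at a given index $j\ge2$, then $W_j(u,v)=W_{j-1}(u,v)$. Now $\pi(z)$ satisfies (\ref{eq:spectral-equation}), while $\psi(z)=(J-zI)^{-1}\delta_1$ satisfies $b_{j-1}\psi_{j-1}+(q_j-z)\psi_j+b_j\psi_{j+1}=0$ for $j\ge2$ together with $(q_1-z)\psi_1+b_1\psi_2=1$; indeed $J$ acts as the formal Jacobi difference expression on its domain, because $J=J^*\subseteq J_0^*$, and $(J-zI)\psi=\delta_1$. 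Hence $W_j(\pi,\psi)$ is independent of $j$, and evaluating at $j=1$ with $\pi_1=1$ and the two initial relations yields $W_j(\pi,\psi)\equiv1$, i.e.\ $b_j(\pi_j\psi_{j+1}-\pi_{j+1}\psi_j)=1$ for every $j$.

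To see that $F\in\dom(J)$, note that $F$ coincides with the sequence $\pi_k(z)\,\psi(z)$ for every $j\ge k$, so $F-\pi_k(z)\psi(z)$ is supported on $\{1,\dots,k-1\}$ and therefore lies in $l_{\rm fin}(\nats)=\dom(J_0)\subseteq\dom(J)$; since $\psi(z)\in\dom(J)$ as a value of the resolvent, it follows that $F\in\dom(J)$. One then computes $(J-zI)F$ coordinatewise via the Jacobi expression. For $2\le j\le k-1$ the three relevant entries are all of the form $\psi_k(z)\pi_{\cdot}(z)$, so the combination equals $\psi_k(z)$ times the recurrence for $\pi$ at index $j$, hence $0$; for $j\ge k+1$ they are all of the form $\pi_k(z)\psi_{\cdot}(z)$, so it equals $\pi_k(z)$ times the recurrence for $\psi$ at index $j$, again $0$; at $j=1$ with $k\ge2$ it vanishes through the initial relation for $\pi$, while when $k=1$ one has $F=\psi$ and the first coordinate equals $1$ through the initial relation for $\psi$. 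At the single remaining index $j=k$ (for $k\ge2$) the mixed combination, after eliminating $b_{k-1}\pi_{k-1}+(q_k-z)\pi_k$ by the recurrence for $\pi$ at index $k$, collapses exactly to $b_k(\pi_k\psi_{k+1}-\pi_{k+1}\psi_k)=W_k(\pi,\psi)=1$. Thus $(J-zI)F=\delta_k$, whence $F=(J-zI)^{-1}\delta_k$ and the formula follows.

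I expect the only genuine subtlety to be the membership $F\in\dom(J)$. In the limit-circle case (\ref{eq:one-one}) the domain $\dom(J)$ is a proper subspace of $\{f\in l_2(\nats):J_0^*f\in l_2(\nats)\}$ cut out by a boundary condition at infinity, so it does not suffice to observe that $F$ and its Jacobi image lie in $l_2(\nats)$. The finite-support remark $F-\pi_k(z)\psi(z)\in l_{\rm fin}(\nats)$ circumvents this by reducing the behaviour of $F$ at infinity to that of the bona fide domain element $\psi(z)$. The Wronskian computation, though it supplies the decisive constant, is entirely routine.
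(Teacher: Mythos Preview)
Your proof is correct and takes essentially the same approach as the paper's: define the candidate sequence, verify it lies in $\dom(J)$, and check directly that $(J-zI)$ applied to it gives $\delta_k$. The paper's proof is a terse sketch of exactly this argument; your version is more careful about the domain membership in the limit-circle case and makes explicit the Wronskian identity $W_k(\pi,\psi)=1$, which the paper records only later in the proof of Proposition~\ref{prop:Gkk-formula}.
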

\begin{proof}
  Fix any $n\in\nats$ and consider the sequence
  \begin{equation*}
    \eta(n,z):=\{\pi_j(z)\psi_n(z)\}_{j=1}^n\cup
    \{\psi_j(z)\pi_n(z)\}_{j=n+1}^\infty\,,
  \end{equation*}
which clearly is in $l_2(\nats)$. By
  the definition of $\pi(z)$ and $\psi(z)$, one verifies that
  \begin{equation*}
    (J-zI)\eta(n,z)=\delta_n
  \end{equation*}
This completes the proof.
\end{proof}

Let us use the following notation
\begin{equation}
  \label{eq:g-k-def}
  G(z,n):=\inner{\delta_n}{(J-zI)^{-1}\delta_n}\,,
\qquad z\in\complex\setminus\sigma(J)
\end{equation}
Note that $m(z)=G(z,1)$ and, in view of
(\ref{eq:delta-k-through-delta-1}) and (\ref{eq:g-k-def}), one has
\begin{equation}
  \label{eq:green-integral}
  G(z,n)=\int_\reals\frac{d\rho_n(t)}{t-z}\,,
\end{equation}
where
\begin{equation}
  \label{eq:rho-def}
  d\rho_n(t):=\pi_n^2(t)d\rho(t)\,.
\end{equation}

The next assertion is found in \cite[Thm. 2.8]{MR1616422}. We provide a simple
proof in which the objects defined above are used.
\begin{proposition}
  \label{prop:Gkk-formula}
  For any $n\in\nats\setminus\{1\}$
\begin{equation}
  \label{eq:Gkk-formula2}
   G(z,n)=\frac{-1}{b_n^2m_n^+(z)+b_{n-1}^2m_n^-(z)+z-q_n}
\end{equation}
\end{proposition}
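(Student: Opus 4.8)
The plan is to compute the diagonal resolvent entry $G(z,n)=\inner{\delta_n}{(J-zI)^{-1}\delta_n}$ directly from Proposition~\ref{prop:resolvent-matrix}, which gives $G(z,n)=\pi_n(z)\psi_n(z)$, and then to re-express the product $\pi_n\psi_n$ in terms of the quantities $m_n^+$ and $m_n^-$ supplied by Proposition~\ref{prop:m-plus-minus-formulae}. The cleanest route is to isolate the single scalar $\psi_n(z)$ and show that its reciprocal is exactly the denominator appearing in (\ref{eq:Gkk-formula2}). Concretely, I would start from the difference equation that $\psi(z)$ satisfies. Since $\psi(z)=(J-zI)^{-1}\delta_1$, for every index $k\ge 2$ the sequence $\psi(z)$ obeys the homogeneous recurrence $b_{k-1}\psi_{k-1}+(q_k-z)\psi_k+b_k\psi_{k+1}=0$ (the inhomogeneity $\delta_1$ only affects the first row). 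In particular, at the interior index $k=n$ one has
\begin{equation}
  \label{eq:psi-recurrence-at-n}
  b_{n-1}\psi_{n-1}(z)+(q_n-z)\psi_n(z)+b_n\psi_{n+1}(z)=0\,.
\end{equation}

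First I would divide (\ref{eq:psi-recurrence-at-n}) by $\psi_n(z)$ and recognize the two flanking ratios. By Proposition~\ref{prop:m-plus-minus-formulae} the right ratio is $\psi_{n+1}/\psi_n=-b_n\,m_n^+(z)$, so the term $b_n\psi_{n+1}/\psi_n=-b_n^2 m_n^+(z)$. The main interpretive step is to handle the left ratio $\psi_{n-1}/\psi_n$: although $m_n^-$ is defined through the first-kind polynomials $\pi$ rather than through $\psi$, the key point is that on the indices $\{1,\dots,n\}$ the truncated operator $J_n^-$ governs both $\pi$ and $\psi$ via the same recurrence, so the ratio $\psi_{n-1}/\psi_n$ coincides with $\pi_{n-1}/\pi_n$ and equals $-b_{n-1}\,m_n^-(z)$; hence $b_{n-1}\psi_{n-1}/\psi_n=-b_{n-1}^2 m_n^-(z)$. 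Substituting both into the divided recurrence yields
\begin{equation}
  \label{eq:divided-recurrence}
  -b_{n-1}^2 m_n^-(z)+(q_n-z)-b_n^2 m_n^+(z)=\frac{?}{\psi_n(z)}\,,
\end{equation}
but the cleaner bookkeeping is obtained by keeping track of the reciprocal $1/\psi_n$ rather than guessing the numerator, so I would instead derive the identity by using the resolvent relation directly.

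The tidiest derivation I would actually carry out is this: apply $(J-zI)$ to $\psi(z)$ and take the $n$-th component. Because $(J-zI)\psi(z)=\delta_1$, the $n$-th row gives $b_{n-1}\psi_{n-1}+(q_n-z)\psi_n+b_n\psi_{n+1}=0$ exactly as in (\ref{eq:psi-recurrence-at-n}); to surface the factor $G(z,n)=\pi_n\psi_n$ I would instead consider the combination $\inner{\delta_n}{[(J-zI)(J-zI)^{-1}-I]\delta_n}$ and expand via the Wronskian-type relation $\pi_{n}(z)\psi_{n+1}(z)-\pi_{n+1}(z)\psi_n(z)=-b_n^{-1}$, which follows from the constancy of the Wronskian of the two solutions $\pi$ and $\psi$ together with their normalizations. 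This Wronskian identity, combined with (\ref{eq:psi-recurrence-at-n}) and the ratio formulas of Proposition~\ref{prop:m-plus-minus-formulae}, lets me solve for $\pi_n\psi_n$: multiplying (\ref{eq:psi-recurrence-at-n}) by $\pi_n$ and using the Wronskian to eliminate the cross terms collapses the left side to $-1$ plus $[b_n^2 m_n^+(z)+b_{n-1}^2 m_n^-(z)+z-q_n]\,\pi_n\psi_n$, giving (\ref{eq:Gkk-formula2}) upon rearrangement.

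The step I expect to be the main obstacle is the bookkeeping that converts the flanking ratios into $m_n^+$ and $m_n^-$ with the correct signs and $b$-weights, since $m_n^-$ is defined via $\pi$ while the recurrence is written for $\psi$; the resolution is the observation that $\pi$ and $\psi$ solve the same second-order recurrence, so their consecutive ratios at an interior site are linked by the Wronskian and the two $m$-functions account for the two boundary behaviors at the left block $\mathcal{F}_{n-1}$ and the right block $\mathcal{F}_n^\perp$. Once the signs are pinned down, the proof is a one-line rearrangement.
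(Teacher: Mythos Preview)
Your proposal contains one genuine error and one eventually-correct strategy tangled together. The claim that ``on the indices $\{1,\dots,n\}$ the truncated operator $J_n^-$ governs both $\pi$ and $\psi$ via the same recurrence, so the ratio $\psi_{n-1}/\psi_n$ coincides with $\pi_{n-1}/\pi_n$'' is false: $\pi$ and $\psi$ are \emph{linearly independent} solutions of (\ref{eq:recurrence-spectral}) (indeed $\psi$ fails the first row, since $(J-zI)\psi=\delta_1$), so their consecutive ratios are never equal at an interior site. If they were, the Wronskian would vanish. This error invalidates your first attempt at (\ref{eq:divided-recurrence}).

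Your fallback---multiply (\ref{eq:psi-recurrence-at-n}) by $\pi_n$ and use the Wronskian at level $n-1$ to trade $\pi_n\psi_{n-1}$ for $\pi_{n-1}\psi_n$ minus a constant---is correct and is essentially the paper's own argument. The paper writes $G(z,n)=\pi_n\psi_n/W_n[\pi,\psi]$ with $W_n=b_n(\pi_n\psi_{n+1}-\pi_{n+1}\psi_n)=1$ (note the sign: your stated value $-b_n^{-1}$ is off), rewrites this as $1/(b_n\psi_{n+1}/\psi_n-b_n\pi_{n+1}/\pi_n)$, inserts Proposition~\ref{prop:m-plus-minus-formulae} to get the intermediate form $-1/(b_n^2 m_n^+-1/m_{n+1}^-)$, and then eliminates $m_{n+1}^-$ in favor of $m_n^-$ via the recurrence for $\pi$ at index $n$. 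Your version hides the passage through $m_{n+1}^-$ by applying the Wronskian shift one step to the left, but the content is identical: both routes amount to combining $W_n=1$ with the three-term recurrence and Proposition~\ref{prop:m-plus-minus-formulae}. So once you drop the incorrect ratio claim and fix the Wronskian sign, your argument matches the paper's.
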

\begin{proof}
Consider the (modified) Wronskian of the difference
equation~(\ref{eq:recurrence-spectral}) for the sequences $\pi(z)$ and
$\psi(z)$:
  \begin{equation*}
    W_n[\pi(z),\psi(z)]:=
b_n\left(\pi_n(z)\psi_{n+1}(z)-\pi_{n+1}(z)\psi_{n}(z)\right)\,.
  \end{equation*}
Since  $\pi(z)$ and
$\psi(z)$ are solutions of (\ref{eq:recurrence-spectral}), one has
that, for all $n\in\nats$ and $z\in\complex$,
  \begin{equation*}
    W_n[\pi(z),\psi(z)] = 1\,.
  \end{equation*}
Using this and  Proposition~\ref{prop:resolvent-matrix}, one writes
\begin{align*}
  G(z,n)&=\frac{\pi_n\psi_n}{W_n[\pi,\psi]}\\
  &= \frac{1}{b_n\frac{\psi_{n+1}}{\psi_n}-b_n\frac{\pi_{n+1}}{\pi_n}}\,.
\end{align*}
Thus, Proposition~\ref{prop:m-plus-minus-formulae} implies
  \begin{equation}
    \label{eq:Gkk-formula1}
    G(z,n)=\frac{-1}{b_n^2m_n^+(z)-\frac{1}{m_{n+1}^-(z)}}\,.
  \end{equation}
Finally, by means of the formula
\begin{equation*}
  b_{n-1}^2m_n^-(z)+\frac{1}{m_{n+1}^-(z)}-q_n+z=0\,,
\end{equation*}
which follows from Proposition~\ref{prop:m-plus-minus-formulae} and
the definition of $\pi(z)$, one can rewrite (\ref{eq:Gkk-formula1}) as
(\ref{eq:Gkk-formula2}).
\end{proof}

\begin{lemma}
  \label{lem:residue}
  Fix $n\in\nats$ and let $x\not\in\sigma_{ess}(J)$. Then
  \begin{equation}
    \label{eq:res-for}
    \lim_{z\to x}(x-z)G(z,n)=\pi_n^2(x)\rho(\{x\})
  \end{equation}
\end{lemma}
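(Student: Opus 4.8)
The plan is to exploit the Cauchy--Stieltjes transform representation (\ref{eq:green-integral})--(\ref{eq:rho-def}), which writes $G(\cdot,n)$ as the transform of the positive measure $\rho_n$. The first step is to record that $\rho_n$ is finite: since $\Phi\pi_n=\delta_n$ and $\Phi$ is unitary,
\begin{equation*}
  \rho_n(\reals)=\int_\reals\pi_n^2(t)\,d\rho(t)
  =\norm{\pi_n}_{L_2(\reals,\rho)}^2=\norm{\delta_n}^2=1\,.
\end{equation*}
Rewriting the quantity of interest as $(x-z)G(z,n)=\int_\reals\frac{x-z}{t-z}\,d\rho_n(t)$, the task reduces to computing the limit of this integral as $z\to x$.

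Next I would localize using the hypothesis $x\not\in\sigma_{ess}(J)$. Because $\sigma(J)$ coincides with the support of $\rho$ (the canonical representation realizes $J$ as multiplication by the independent variable on $L_2(\reals,\rho)$) and $\sigma_{ess}(J)$ is the set of its accumulation points, there is an $\epsilon>0$ with $(x-\epsilon,x+\epsilon)\cap\operatorname{supp}\rho\subseteq\{x\}$; consequently $z\in\complex\setminus\sigma(J)$ whenever $0<\abs{x-z}<\epsilon/2$, so $G(z,n)$ is defined along the approach. Since $d\rho_n=\pi_n^2\,d\rho$ gives $\operatorname{supp}\rho_n\subseteq\operatorname{supp}\rho$, the restriction of $\rho_n$ to $(x-\epsilon,x+\epsilon)$ is the single atom of mass $\rho_n(\{x\})=\pi_n^2(x)\rho(\{x\})$ located at $t=x$.

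Finally I would split $\int_\reals=\int_{\abs{t-x}<\epsilon}+\int_{\abs{t-x}\ge\epsilon}$. On the first region the only mass sits at $t=x$, where $\frac{x-z}{t-z}=1$, so that integral equals $\pi_n^2(x)\rho(\{x\})$ for every admissible $z\ne x$. On the second region one has $\abs{t-z}\ge\abs{t-x}-\abs{x-z}\ge\epsilon/2$, whence $\bigl|\tfrac{x-z}{t-z}\bigr|\le\tfrac{2}{\epsilon}\abs{x-z}$ and the tail is bounded by $\tfrac{2}{\epsilon}\abs{x-z}\,\rho_n(\reals)\to 0$. Combining the two pieces yields (\ref{eq:res-for}). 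The only genuinely delicate step is the middle one---translating $x\not\in\sigma_{ess}(J)$ into the assertion that $\rho_n$ carries at most an atom at $x$ in a neighborhood of $x$; once this local structure is secured, the remainder is an elementary estimate.
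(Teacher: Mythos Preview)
Your argument is correct and is essentially the same as the paper's: both use the integral representation (\ref{eq:green-integral})--(\ref{eq:rho-def}), isolate the atom of $\rho_n$ at $x$ using $x\notin\sigma_{ess}(J)$, and show the remainder contributes $o(1)$ after multiplying by $(x-z)$. The paper compresses your splitting into a single term $C(z)$ that is ``uniformly bounded inside a closed disk intersecting $\sigma(J)$ only at $x$,'' whereas you supply the explicit $\epsilon/2$-estimate justifying that boundedness.
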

\begin{proof}
  The proof follows from the integral representation of the function
  $G(z,n)$. Indeed,
  \begin{equation}
      \label{eq:nevanlinna-modified}
    G(z,n)=\int_\reals \frac{d\rho_n(t)}{t-z}=
\frac{\pi_n^2(x)\rho(\{x\})}{x-z} +C(z)\,,
  \end{equation}
where $C(z)$ is uniformly bounded inside a closed disk
intersecting $\sigma(J)$ only at $x$. Thus,
\begin{equation*}
  (x-z)G(z,n)=\pi_n^2(x)\rho(\{x\}) +(x-z)C(z)\,.
\end{equation*}
\end{proof}
\begin{remark}
  \label{rem:teschl}
  If $x\in\sigma_p(J)$, Lemma~\ref{lem:residue} is a special case of
  \cite[Eq.\,2.36]{MR1711536}. This result amounts to the
  well-known fact that the residue of the resolvent at an isolated
  eigenvalue is equal to the kernel of the projection onto the
  eigenspace. Indeed, when $x$ is an eigenvalue, taking into account
  that
  \begin{equation*}
    P(x):=\frac{\inner{\pi(x)}{\cdot}}{\norm{\pi(x)}^2}\pi(x)
  \end{equation*}
is a projection onto the corresponding eigenspace, one verifies that
the r.\,h.\,s. of (\ref{eq:res-for}) is the $n$-th diagonal element of
the matrix representation of $P(x)$ with respect to the canonical basis.
\end{remark}
\begin{lemma}
  \label{lem:eigenvalue-zero-or-pole}
  Fix $n\in\nats$ and let $x$ be an isolated eigenvalue of $J$. Then,
  $x$ is a zero of the polynomial $\pi_n(\cdot)$ if and only if $x$ is a zero
  of $G(\cdot,n)$. Also, $x$ is not a zero of $\pi_n(\cdot)$, if and only if
  $x$ is a pole of $G(\cdot,n)$.
\end{lemma}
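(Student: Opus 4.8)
The plan is to exploit the factorization of the diagonal Green function that follows from Proposition~\ref{prop:resolvent-matrix}. Taking $j=k=n$ in (\ref{eq:resolvent-matrix-entries}) gives $G(z,n)=\pi_n(z)\psi_n(z)$, and substituting the Weyl solution (\ref{eq:weyl-solution}) yields $\psi_n(z)=m(z)\pi_n(z)+\theta_n(z)$. Since $\pi_n$ and $\theta_n$ are polynomials, hence entire, the only possible source of non-analyticity of $G(\cdot,n)$ near $x$ is the function $m$. So the first step is to pin down the local behaviour of $m$ at $x$: because $x$ is an isolated eigenvalue, I would argue that $\rho$ has an atom at $x$ with $\rho(\{x\})>0$ and no other mass in a punctured neighbourhood, so that $m(z)=\rho(\{x\})/(x-z)+m_{\mathrm{reg}}(z)$ with $m_{\mathrm{reg}}$ analytic at $x$. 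Equivalently, this is Lemma~\ref{lem:residue} applied with $n=1$ (recall $\pi_1\equiv 1$), which gives $\lim_{z\to x}(x-z)m(z)=\rho(\{x\})$.

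The positivity $\rho(\{x\})>0$ is the one point that needs justification, and I expect it to be the main (though minor) obstacle. I would obtain it from simplicity: since $\delta_1$ is cyclic for $J$ (see (\ref{eq:delta-k-through-delta-1})), no eigenvector can be orthogonal to $\delta_1$, for if $Jv=xv$ and $v\perp\delta_1$ then $v\perp J^k\delta_1$ for all $k$ and hence $v=0$. Thus $E(\{x\})\delta_1\ne 0$ and $\rho(\{x\})=\norm{E(\{x\})\delta_1}^2>0$.

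With the pole of $m$ located, the two cases are immediate. If $\pi_n(x)\ne 0$, then $\psi_n(z)=m(z)\pi_n(z)+\theta_n(z)$ inherits a simple pole at $x$, and since $\pi_n(x)\ne0$ the product $G(z,n)=\pi_n(z)\psi_n(z)$ has a simple pole at $x$. If instead $\pi_n(x)=0$, then writing $\pi_n(z)=(z-x)\tilde{p}(z)$ with $\tilde{p}$ a polynomial, the factor $m(z)\pi_n(z)$ becomes analytic at $x$ (the zero of $\pi_n$ cancels the simple pole of $m$), so $\psi_n$, and therefore $G(\cdot,n)=\pi_n\psi_n$, is analytic at $x$ with $G(x,n)=\pi_n(x)\psi_n(x)=0$.

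Finally I would assemble the equivalences. Near the isolated eigenvalue $x$ the function $G(\cdot,n)$ is meromorphic of the form $c/(x-z)+C(z)$ with $C$ analytic, so it is either analytic (precisely when $\pi_n(x)=0$) or has a simple pole (precisely when $\pi_n(x)\ne0$); these alternatives are mutually exclusive. Combining this dichotomy with the computation $G(x,n)=0$ in the analytic case gives both stated equivalences: $\pi_n(x)=0$ if and only if $G(\cdot,n)$ is analytic at $x$ and vanishes there, i.e.\ $x$ is a zero of $G(\cdot,n)$; and $\pi_n(x)\ne0$ if and only if $x$ is a pole of $G(\cdot,n)$.
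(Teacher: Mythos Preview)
Your proof is correct and follows essentially the same route as the paper's: write $G(z,n)=m(z)\pi_n^2(z)+\theta_n(z)\pi_n(z)$, isolate the simple pole of $m$ at the isolated eigenvalue $x$ via the integral representation, and then read off whether the pole survives in $G(\cdot,n)$ or is cancelled according to whether $\pi_n(x)$ vanishes. Your version is in fact slightly more complete, since you justify $\rho(\{x\})>0$ via the cyclicity of $\delta_1$, a point the paper uses without comment.
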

\begin{proof}
  First we prove that $\pi_n(x)=0$ implies $G(x,n)=0$. According to
  (\ref{eq:resolvent-matrix-entries}) and (\ref{eq:weyl-solution}),
  one has
  \begin{align*}
    G(z,n)&=\pi_n(z)\psi_n(z)\\
    &=m(z)\pi^2_n(z)+\theta_n(z)\pi_n(z)\,.
  \end{align*}
 Since, as in the proof of Lemma~\ref{lem:residue},
 \begin{equation}
   \label{eq:aux-int-representation}
   m(z)=\int_\reals\frac{d\rho(t)}{t-z}=\frac{\rho(\{x\})}{x-z} +C(z)\,,
 \end{equation}
where $C(z)$ is uniformly bounded inside a closed disk
intersecting $\sigma(J)$ only at $x$, it holds that
\begin{align*}
   G(z,n)&=\left(\frac{\rho(\{x\})}{x-z} +C(z)\right)
\pi^2_n(z)+\theta_n(z)\pi_n(z)\\
&=\left(\frac{\rho(\{x\})}{x-z} +C(z)\right)
(x-z)^2h^2(z)+\theta_n(z)\pi_n(z)\,.
\end{align*}
Here we have written $\pi_n(z)=(x-z)h(z)$. The assertion follows
by noticing that
\begin{equation*}
  \lim_{z\to x}\left[(x-z)\rho(\{x\})h^2(z)
    +C(z)(x-z)^2h^2(z)+\theta_n(z)\pi_n(z)
    \right]=0\,.
\end{equation*}
Now let us show that $\pi_n(x)\ne 0$ implies that $\lim_{z\to
  x}G(z,n)=\infty$. Since $\rho(\{x\})\ne 0$, it follows from
Lemma~\ref{lem:residue} that
\begin{equation*}
  \pi_n(x)\ne 0\Rightarrow\lim_{z\to x}(x-z)G(z,n)\ne 0\,.
\end{equation*}
The remaining converse implications follow from
the ones just proven.
\end{proof}
The next assertion is reminiscent of
Corollary~\ref{cor:pi-k-zero-eigenvalue-j-}.
\begin{lemma}
  \label{lem:pi-k-zero-eigenvalue-j-+}
  Let $n\in\nats\setminus\{1\}$. If $x$ is an eigenvalue of
  $J$ and a zero of the polynomial $\pi_n(\cdot)$, then $x$ is an
  eigenvalue of $J_n^+$. On the other hand, if $x$ is an isolated
  eigenvalue of $J$ and $J_n^+$, then $x$ is a zero
  of $\pi_n(\cdot)$.
\end{lemma}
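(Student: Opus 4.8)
The plan is to prove the two implications by different means: the first by exhibiting an explicit eigenvector of $J_n^+$, and the second by a pole--zero analysis at $x$ using the representation of $G(\cdot,n)$.

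For the forward implication, suppose $x$ is an eigenvalue of $J$ with $\pi_n(x)=0$. Then $\pi(x)\in\dom(J)$ and $J\pi(x)=x\pi(x)$, so $\pi(x)$ is a nonzero $l_2$-eigenvector. I would take the tail $u:=P_{\mathcal{F}_n^\perp}\pi(x)$, obtained by deleting the first $n$ entries of $\pi(x)$, and check three things. First, $u\in\dom(J_n^+)=\dom(J)\cap\mathcal{F}_n^\perp$: indeed $u$ differs from $\pi(x)\in\dom(J)$ by the finitely supported vector $\sum_{k=1}^n\pi_k(x)\delta_k\in l_{\rm fin}(\nats)\subseteq\dom(J)$, so $u\in\dom(J)$, and $u\in\mathcal{F}_n^\perp$ by construction. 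Second, $J_n^+u=xu$ componentwise: for $k\ge n+2$ the projection keeps the $k$-th entry and $(Ju)_k=b_{k-1}\pi_{k-1}(x)+q_k\pi_k(x)+b_k\pi_{k+1}(x)=x\pi_k(x)$ by \eqref{eq:recurrence-spectral}, while for $k=n+1$ one has $(Ju)_{n+1}=q_{n+1}\pi_{n+1}(x)+b_{n+1}\pi_{n+2}(x)$ because $u_n=0$, and this equals $x\pi_{n+1}(x)=xu_{n+1}$ precisely because the recurrence at $k=n+1$ reads $b_n\pi_n(x)+q_{n+1}\pi_{n+1}(x)+b_{n+1}\pi_{n+2}(x)=x\pi_{n+1}(x)$ with $\pi_n(x)=0$. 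Third, $u\neq 0$: if the whole tail vanished, the backward form of \eqref{eq:recurrence-spectral} would force $\pi_k(x)=0$ for every $k$, contradicting $\pi_1(x)=1$. Hence $x$ is an eigenvalue of $J_n^+$.

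For the converse, assume $x$ is an isolated eigenvalue of both $J$ and $J_n^+$ and argue by contradiction, supposing $\pi_n(x)\neq 0$. Since $x$ is an isolated eigenvalue of $J_n^+$, the Weyl function $m_n^+$ has a pole at $x$, and its residue, $\inner{\delta_{n+1}}{P\delta_{n+1}}$ for the spectral projection $P$ of $J_n^+$ onto the $x$-eigenspace, is nonzero because any eigenvector $w$ of the half-line Jacobi operator $J_n^+$ has $w_{n+1}\neq 0$ (otherwise its own recurrence propagates zeros and forces $w=0$). On the other hand, $\pi_n(x)\neq 0$ gives, via Corollary~\ref{cor:pi-k-zero-eigenvalue-j-}, that $x\notin\sigma(J_n^-)$, so $m_n^-$ is finite at $x$. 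Substituting these facts into Proposition~\ref{prop:Gkk-formula}, the denominator $b_n^2m_n^+(z)+b_{n-1}^2m_n^-(z)+z-q_n$ diverges as $z\to x$, so $G(z,n)\to 0$. But Lemma~\ref{lem:eigenvalue-zero-or-pole}, which applies because $x$ is an isolated eigenvalue of $J$, says that $\pi_n(x)\neq 0$ forces $x$ to be a pole of $G(\cdot,n)$. This contradiction gives $\pi_n(x)=0$.

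The step I expect to be delicate, and would check most carefully, is the non-removability of the singularity of $m_n^+$ at $x$ in the converse, that is, the claim that an eigenvalue of $J_n^+$ is genuinely registered by the diagonal quantity $m_n^+$; this rests on the elementary but essential fact that an eigenvector of a half-line Jacobi operator cannot vanish at its first site. As a cross-check, and a route that avoids Proposition~\ref{prop:Gkk-formula} altogether, I would substitute $\psi_k(z)=m(z)\pi_k(z)+\theta_k(z)$ from \eqref{eq:weyl-solution} into the formula $m_n^+(z)=-\psi_{n+1}(z)/(b_n\psi_n(z))$ of Proposition~\ref{prop:m-plus-minus-formulae}: since $m$ has a simple pole at $x$ with nonzero weight $\rho(\{x\})$ (as in Lemma~\ref{lem:residue}), if $\pi_n(x)\neq 0$ this pole persists in both $\psi_{n+1}$ and $\psi_n$ and cancels in the quotient, leaving $m_n^+$ finite at $x$, in direct contradiction with $m_n^+$ having a pole there.
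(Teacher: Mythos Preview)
Your proof is correct and follows essentially the same route as the paper's. For the forward implication the paper simply states that $\{\pi_k(x)\}_{k=n+1}^\infty$ is an eigenvector of $J_n^+$, and you have supplied the detailed verification; for the converse both you and the paper combine Lemma~\ref{lem:eigenvalue-zero-or-pole}, Proposition~\ref{prop:Gkk-formula}, and Corollary~\ref{cor:pi-k-zero-eigenvalue-j-} in a contradiction argument, the only difference being the order of the deductions (the paper goes from ``$G$ has a pole and $m_n^+$ has a pole'' to ``$m_n^-$ has a pole'', whereas you go from ``$m_n^+$ has a pole and $m_n^-$ is finite'' to ``$G$ is finite''). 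Your extra care in justifying that $m_n^+$ genuinely has a pole at an isolated eigenvalue of $J_n^+$, and your alternative cross-check via \eqref{eq:weyl-solution} and Proposition~\ref{prop:m-plus-minus-formulae}, are sound additions that the paper leaves implicit.
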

\begin{proof}
  The first assertion is proven by noting that the sequence
  $\{\pi_k(x)\}_{k=n+1}^\infty$ is an eigenvector of $J_n^+$. The second
  assertion is proven by \textit{reductio ad absurdum}. Indeed, assume
  that $x$ is an isolated
  eigenvalue of $J$ and $J_n^+$, and is not a zero of
  $\pi_n(\cdot)$. Then, by
  Lemma~\ref{lem:eigenvalue-zero-or-pole}, $x$ is a pole of
  $G(\cdot,n)$ and therefore, since $x$ is a pole of $m_n^+(\cdot)$,
  Proposition \ref{prop:Gkk-formula} implies that $x$ is also a pole
  of $m_n^-(\cdot)$, that is, an eigenvalue of
  $J_n^-$. Corollary~\ref{cor:pi-k-zero-eigenvalue-j-} shows that this
  contradicts our assumptions.
\end{proof}

\begin{remark}
 \label{rem:rephrase}
 By means of Corollary \ref{cor:pi-k-zero-eigenvalue-j-} one rephrases
 the previous lemma as follows.  For any $n\in\nats\setminus\{1\}$,
  \begin{align*}
    \sigma_p(J)\cap\sigma(J_n^-)&\subset\sigma_p(J_n^+)\\
    \sigma_{disc}(J)\cap\sigma_{disc}(J_n^+)&\subset\sigma(J_n^-)\,,
  \end{align*}
  where the notation
  $\sigma_{disc}(\cdot):=\sigma(\cdot)\setminus\sigma_{ess}(\cdot)$ has been
  used.
\end{remark}

\section{Comparative spectral analysis of $J$ and $\widetilde{J}_n$}
\label{sec:direct-spectral-analysis-general-case}

Let $\widetilde{\rho}(t)$ be the spectral function (see
(\ref{eq:spectral-function-def})) corresponding to the operator
$\widetilde{J}_n$ defined in (\ref{eq:def-tilde-j}). Also, let
$\widetilde{G}(z,n)$ be the function given by (\ref{eq:g-k-def}) with
$\widetilde{J}_n$ instead of $J$. We emphasize the fact that
the value of the subscript of $\widetilde{J}_n$ coincides with the
value of the second argument of $\widetilde{G}(z,n)$, that is,  for
any $n\in\nats$,
\begin{equation*}
  \widetilde{G}(z,n)=\inner{\delta_n}{(\widetilde{J}_n-zI)^{-1}\delta_n}\,.
\end{equation*}

Define, for $n\in\nats$, the function
\begin{equation}
  \label{eq:M-definition}
  \mathfrak{M}_n(z):=\frac{G(z,n)}{\widetilde{G}(z,n)}
\end{equation}
and the constant
  \begin{equation}
    \label{eq:def-gamma}
    \gamma:=\frac{\theta^2h}{1-\theta^2}\,.
  \end{equation}
\begin{lemma}
  \label{lem:master-formula}
For any $n\in\nats$,
\begin{equation}
  \label{eq:master}
  \mathfrak{M}_n(z)=\theta^2+(1-\theta^2)(\gamma -z)G(z,n)
\end{equation}
and
\begin{equation}
  \label{eq:master2}
  \frac{1}{\mathfrak{M}_n(z)}=\frac{1}{\theta^2}+
\left(1-\frac{1}{\theta^2}\right)(\gamma -z)\widetilde{G}(z,n)
\end{equation}
\end{lemma}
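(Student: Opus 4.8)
The plan is to reduce both identities to Proposition~\ref{prop:Gkk-formula}, which expresses $G(z,n)$ through the Weyl functions $m_n^{\pm}$ of the decoupled operators $J_n^{\pm}$. The decisive structural observation is that the perturbation in (\ref{eq:def-tilde-j}) touches only the three entries $b_{n-1},q_n,b_n$, and that none of these enters the submatrices defining $J_n^+$ and $J_n^-$. Indeed, $J_n^+$ is the compression of $J$ to $\mathcal{F}_n^\perp$, whose matrix is (\ref{eq:jm-0}) with the first $n$ rows and columns deleted (so it involves only $q_k,b_k$ with $k>n$), while $J_n^-$ is the compression to $\mathcal{F}_{n-1}$ (involving only $q_k,b_k$ with $k<n$). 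Since the perturbation terms in (\ref{eq:def-tilde-j}) are bounded, $\widetilde J_n$ has the same domain as $J$, hence the same boundary condition at infinity; therefore $(\widetilde J_n)_n^+=J_n^+$ and $(\widetilde J_n)_n^-=J_n^-$ as self-adjoint operators, and their Weyl functions coincide, $\widetilde m_n^+=m_n^+$ and $\widetilde m_n^-=m_n^-$.

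Granting this, the next step is to apply Proposition~\ref{prop:Gkk-formula} to both operators. Writing $D(z):=b_n^2m_n^+(z)+b_{n-1}^2m_n^-(z)+z-q_n$, so that $G(z,n)=-1/D(z)$, and reading off from (\ref{eq:jm-1}) that $\widetilde J_n$ carries the entries $\theta b_{n-1},\theta b_n,\theta^2(q_n+h)$ in place of $b_{n-1},b_n,q_n$, the same proposition gives
\begin{equation*}
  \widetilde G(z,n)=\frac{-1}{\theta^2 b_n^2 m_n^+(z)+\theta^2 b_{n-1}^2 m_n^-(z)+z-\theta^2(q_n+h)}\,.
\end{equation*}
The denominator equals $\theta^2\bigl(b_n^2m_n^++b_{n-1}^2m_n^--q_n\bigr)+z-\theta^2h=\theta^2 D(z)+(1-\theta^2)z-\theta^2h$. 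Dividing the two resolvent diagonals then yields
\begin{equation*}
  \mathfrak M_n(z)=\frac{G(z,n)}{\widetilde G(z,n)}
  =\frac{\theta^2 D(z)+(1-\theta^2)z-\theta^2h}{D(z)}
  =\theta^2+\bigl[(1-\theta^2)z-\theta^2h\bigr]\frac{1}{D(z)}\,.
\end{equation*}
Substituting $1/D(z)=-G(z,n)$ and recognizing $\theta^2h=(1-\theta^2)\gamma$ from (\ref{eq:def-gamma}) collapses the bracket into $(1-\theta^2)(\gamma-z)G(z,n)$, which is exactly (\ref{eq:master}).

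For (\ref{eq:master2}) I would argue symmetrically rather than invert (\ref{eq:master}) by hand. Setting $\widetilde D(z):=\theta^2D(z)+(1-\theta^2)z-\theta^2h$, so that $\widetilde G(z,n)=-1/\widetilde D(z)$, one solves for $D$ in terms of $\widetilde D$ and computes $1/\mathfrak M_n=\widetilde G/G=D/\widetilde D$, again using $1/\widetilde D=-\widetilde G$ and $\theta^2h=(1-\theta^2)\gamma$. Equivalently, one observes that $J$ is recovered from $\widetilde J_n$ by the reverse perturbation with parameters $\theta\mapsto\theta^{-1}$ and $h\mapsto-\theta^2h$, under which the constant $\gamma$ of (\ref{eq:def-gamma}) is invariant; applying the already-proven (\ref{eq:master}) to this reversed perturbation gives (\ref{eq:master2}) at once. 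The case $n=1$ is disposed of separately: there $b_0=0$, the term $b_{n-1}^2m_n^-$ is absent, $G(z,1)=m(z)=-1/(b_1^2m_1^+(z)+z-q_1)$, and the identical algebra applies.

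The only genuinely non-routine point is the identity $\widetilde m_n^\pm=m_n^\pm$; everything after it is bookkeeping. The care required there is to confirm that the compressions $(\widetilde J_n)_n^\pm$ and $J_n^\pm$ agree as self-adjoint operators, so that their Weyl $m$-functions, and not merely their matrices, coincide — which is what makes the finite-rank modification near site $n$ leave the two outer Weyl functions untouched.
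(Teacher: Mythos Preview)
Your proof is correct and follows essentially the same route as the paper: both arguments rest on applying Proposition~\ref{prop:Gkk-formula} to $J$ and to $\widetilde{J}_n$ after observing that the perturbation leaves $m_n^{\pm}$ unchanged, and both handle $n=1$ via the Riccati relation $m(z)=-1/(b_1^2m_1^+(z)+z-q_1)$. Your additional symmetry remark---that $J$ is recovered from $\widetilde{J}_n$ by $\theta\mapsto\theta^{-1}$, $h\mapsto-\theta^2h$, leaving $\gamma$ invariant---is a pleasant way to obtain (\ref{eq:master2}) from (\ref{eq:master}), whereas the paper simply says the proof is ``completely analogous.''
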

\begin{proof}
It follows from (\ref{eq:Gkk-formula2}) and (\ref{eq:M-definition}) that
\begin{equation*}
  \mathfrak{M}_n(z)=
\frac{\theta^2(b_nm_n^+(z)+b_{n-1}m_n^-(z)+z\theta^{-2}-q_n-h)}
{b_nm_n^+(z)+b_{n-1}m_n^-(z)+z-q_n}
\end{equation*}
from which one verifies (\ref{eq:master}) for
$n\in\nats\setminus\{1\}$. For $n=1$, (\ref{eq:master}) follows from
the Riccati equation (see \cite[Eq.\,2.15]{MR1616422}) and
\cite[Eq.\,2.23]{MR1643529})
\begin{equation*}
  b_nm_1^+(z)=q_n-z-\frac{1}{m(z)}
\end{equation*}
after noticing that, in this case,
\begin{equation*}
   P_{\mathcal{F}_1^\perp}\widetilde{J}_1\eval{\mathcal{F}_1^\perp}=
  P_{\mathcal{F}_1^\perp} J\eval{\mathcal{F}_1^\perp}\qquad (\mathcal{F}_1=\Span{\{\delta_1\}})\,.
\end{equation*}
The proof of (\ref{eq:master2}) is completely analogous.
\end{proof}
Equation (\ref{eq:master}) for the case $n=1$ is
\cite[Eq. 18]{MR2998707}. As in \cite{MR2998707}, this equation, now
for $n\in\nats$, is an
important ingredient of the method used for the comparative spectral
analysis of $J$ and $\widetilde{J}_n$. The first immediate consequence
of (\ref{eq:master}) and (\ref{eq:master2}) is the following assertion
\begin{corollary}
  \label{cor:g-gtilde-zero}
  For any $n\in\nats$, when $z\ne\gamma$,
\begin{equation*}
  G(z,n)=0 \Longleftrightarrow \widetilde{G}(z,n)=0
\end{equation*}
\end{corollary}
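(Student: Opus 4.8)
The plan is to obtain the equivalence as a one-line consequence of the master equation (\ref{eq:master}), once the quotient hidden in $\mathfrak{M}_n$ is cleared. Recalling that $\mathfrak{M}_n(z)=G(z,n)/\widetilde{G}(z,n)$ by (\ref{eq:M-definition}), I would first multiply (\ref{eq:master}) through by $\widetilde{G}(z,n)$ to replace the quotient identity by the product identity
\begin{equation*}
  G(z,n)=\theta^2\,\widetilde{G}(z,n)+(1-\theta^2)(\gamma-z)\,G(z,n)\widetilde{G}(z,n),
\end{equation*}
which holds for every $z\in\complex\setminus\bigl(\sigma(J)\cup\sigma(\widetilde{J}_n)\bigr)$, that is, wherever both Green functions are finite.

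From this single identity both implications fall out. If $G(z,n)=0$, the rightmost term vanishes because it carries a factor $G(z,n)$, and what remains is $0=\theta^2\widetilde{G}(z,n)$; since $\theta>0$ this forces $\widetilde{G}(z,n)=0$. Conversely, if $\widetilde{G}(z,n)=0$, the rightmost term vanishes again, this time because it carries a factor $\widetilde{G}(z,n)$, and the identity reduces to $G(z,n)=0$. As a cross-check I would also note the equivalent contradiction argument carried out directly on $\mathfrak{M}_n$: were $G(z,n)=0$ while $\widetilde{G}(z,n)\neq0$, one would have $\mathfrak{M}_n(z)=0$, contradicting the value $\theta^2$ assigned by (\ref{eq:master}); and were $\widetilde{G}(z,n)=0$ while $G(z,n)\neq0$, one would have $1/\mathfrak{M}_n(z)=0$, contradicting the value $1/\theta^2$ assigned by (\ref{eq:master2}). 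The second master equation (\ref{eq:master2}) is the natural tool for the second half if a symmetric presentation is preferred.

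The only genuinely delicate point, and hence the step I would treat with most care, is the passage from the meromorphic identity (\ref{eq:master}), where $\mathfrak{M}_n$ appears as a ratio, to the product identity above. This is not a pointwise division by a possibly vanishing $\widetilde{G}(z,n)$; rather, both sides are analytic on $\complex\setminus\bigl(\sigma(J)\cup\sigma(\widetilde{J}_n)\bigr)$, so the product identity is obtained by analytic continuation from the open set where $\widetilde{G}(z,n)\neq0$ and is then valid at the zeros of $\widetilde{G}$ as well. I would finally remark that the hypothesis $z\neq\gamma$ is not actually needed for this particular equivalence: the coefficient $(\gamma-z)$ multiplies a term that is annihilated the instant either Green function vanishes, so the equivalence persists even at $z=\gamma$, where (\ref{eq:master}) degenerates to $G(\gamma,n)=\theta^2\widetilde{G}(\gamma,n)$. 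The restriction is thus a convenience tied to $\gamma$ being the distinguished point of the comparative analysis rather than a necessity of the proof.
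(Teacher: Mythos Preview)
Your proof is correct and essentially matches the paper's approach: the corollary is stated there as ``the first immediate consequence of (\ref{eq:master}) and (\ref{eq:master2})'' with no further argument, and both your product-identity derivation and your contradiction argument on $\mathfrak{M}_n$ and $1/\mathfrak{M}_n$ are valid ways to unpack that one-line claim. Your closing observation that the restriction $z\ne\gamma$ is not actually needed for this equivalence is correct and a nice addition; the paper retains the hypothesis because $\gamma$ is treated separately throughout the comparative analysis (e.g.\ Proposition~\ref{thm:gamma-in-both}), not because the equivalence fails there.
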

\begin{proposition}
  \label{prop:j+-j}
   For any $n\in\nats\setminus\{1\}$,
   \begin{equation*}
     \sigma_p(J_n^+)\cap\sigma(J_n^-)\subset
\sigma_p(J)\cap\sigma_p(\widetilde{J}_n)
   \end{equation*}
\end{proposition}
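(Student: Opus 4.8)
The plan is to reduce the assertion to a single-operator statement and then prove that statement by gluing an eigenvector across the cut at the site $n$.

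First I would observe that the perturbation in (\ref{eq:def-tilde-j}) alters only the three entries $b_{n-1},q_n,b_n$ located at the site $n$, so the compressions introduced in Definition~\ref{def:submatrices} are left untouched by it. Writing $\widetilde{J}_n^{\pm}$ for the operators obtained from $\widetilde{J}_n$ exactly as in that definition, the tail matrix (rows and columns from $n+1$ onward) is literally identical for $J$ and $\widetilde{J}_n$, whence $\widetilde{J}_n^{+}=J_n^{+}$, while the top-left $(n-1)\times(n-1)$ block does not involve $b_{n-1},q_n,b_n$, whence $\widetilde{J}_n^{-}=J_n^{-}$. Consequently the hypothesis $x\in\sigma_p(J_n^{+})\cap\sigma(J_n^{-})$ coincides verbatim with $x\in\sigma_p(\widetilde{J}_n^{+})\cap\sigma(\widetilde{J}_n^{-})$, and it suffices to establish the following single-operator fact: for a Jacobi operator $A$ of the type considered here, with compressions $A_n^{\pm}$ and first-kind solution $\pi^A$, one has $x\in\sigma_p(A_n^{+})\cap\sigma(A_n^{-})\Rightarrow x\in\sigma_p(A)$. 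The proposition then follows by applying this to $A=J$ and to $A=\widetilde{J}_n$.

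To prove the reduced statement I would proceed as follows. Since $x\in\sigma(A_n^{-})$, Corollary~\ref{cor:pi-k-zero-eigenvalue-j-} gives $\pi_n^A(x)=0$. Evaluating the three-term recurrence at the index $n+1$ and using $\pi_n^A(x)=0$ shows that the tail $\{\pi_k^A(x)\}_{k=n+1}^{\infty}$ solves the eigenvalue equation of $A_n^{+}$: the term $b_n\pi_n^A(x)$ that would couple to the site $n$ drops out, so the initial equation of the tail operator at $k=n+1$ holds, as do all interior equations for $k\ge n+2$. On the other hand, $x\in\sigma_p(A_n^{+})$ provides a genuine eigenvector $u=\{u_k\}_{k=n+1}^{\infty}\in l_2(\nats)$ of $A_n^{+}$. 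Both $u$ and $\{\pi_k^A(x)\}_{k=n+1}^{\infty}$ lie in the one-dimensional space of solutions of the tail recurrence that satisfy the tail's initial equation at $k=n+1$, a space spanned by the first-kind polynomials $\pi^{A_n^{+}}$ of $A_n^{+}$; hence each is a multiple of $\pi^{A_n^{+}}$. Since $u\in l_2(\nats)$ is a nonzero such multiple, $\pi^{A_n^{+}}\in l_2(\nats)$, and therefore $\{\pi_k^A(x)\}_{k=n+1}^{\infty}$ belongs to $l_2(\nats)$ and, when $A$ is not essentially self-adjoint, inherits from $u$ the boundary condition at infinity of $A_n^{+}$, which is the boundary condition of $A$ itself (see the remark following Definition~\ref{def:submatrices}).

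With the tail of $\pi^A(x)$ under control, I would conclude that $\pi^A(x)\in l_2(\nats)$, that it satisfies the initial equation at $k=1$ by the very definition of $\pi^A$, and that it obeys the boundary condition at infinity; hence $\pi^A(x)\in\dom(A)$, which by the characterization recalled after (\ref{eq:spectral-equation}) — namely that $\pi^A(z)$ lies in $\ker(A-zI)$ precisely when $\pi^A(z)\in\dom(A)$ — means $x\in\sigma_p(A)$. Taking $A=J$ yields $x\in\sigma_p(J)$, and taking $A=\widetilde{J}_n$, whose hypotheses hold because $\widetilde{J}_n^{\pm}=J_n^{\pm}$, yields $x\in\sigma_p(\widetilde{J}_n)$, which is the claimed inclusion. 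The step I expect to be the main obstacle is the domain question at infinity in the limit-circle case (\ref{eq:one-one}): there every formal solution is square-summable, so membership in $l_2(\nats)$ alone does not place $\pi^A(x)$ in $\dom(A)$, and one must verify that the eigenvector $u$ of $A_n^{+}$ carries exactly the boundary condition defining $A$; this is precisely where the coincidence of the boundary conditions at infinity of $A_n^{+}$ and $A$ is invoked and must be handled with care, whereas in the limit-point case (\ref{eq:nil-nil}) the argument is immediate, since applying the difference expression to $\pi^A(x)$ returns $x\,\pi^A(x)\in l_2(\nats)$, which already forces $\pi^A(x)\in\dom(A)$.
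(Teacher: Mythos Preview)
Your proof is correct and follows essentially the same route as the paper: use Corollary~\ref{cor:pi-k-zero-eigenvalue-j-} to get $\pi_n(x)=0$, identify the tail of $\pi(x)$ with the first-kind polynomials of $J_n^{+}$, pull $l_2$-membership and the boundary condition at infinity from the eigenvector of $J_n^{+}$, and then note that $J_n^{\pm}$ are unaffected by the perturbation so the same argument gives $x\in\sigma_p(\widetilde{J}_n)$. Your write-up is in fact somewhat more careful than the paper's in distinguishing the limit-point and limit-circle cases when verifying $\pi(x)\in\dom(J)$.
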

\begin{proof}
  Denote by $\pi^+(z)$ the sequence of polynomials of the first kind
  associated with the Jacobi operator $J_n^+$. Let
  $\lambda\in\sigma(J_n^-)$, then $\pi_n(\lambda)=0$ by
  Corollary~\ref{cor:pi-k-zero-eigenvalue-j-}. This implies that 
the sequences
$\{\pi_j(\lambda)\}_{j=n+1}^\infty$ and
$\{\pi_j^+(\lambda)\}_{j=1}^\infty$ satisfy the same recurrence
relation, including the initial condition
\begin{equation*}
  q_{n+1}f_1+b_{n+1}f_2=zf_1\,.
\end{equation*}
Since the system of recurrence equations with the initial condition is
uniquely solvable modulo a multiplying constant, one has
  \begin{equation}
    \label{eq:polynomialsj-j+}
    \pi_{j}(\lambda)=C\pi_{j-n}^+(\lambda)\,,\qquad j>n\,.
  \end{equation}
The constant $C$ can be found by noting that
\begin{equation*}
  \pi_{n+1}(\lambda)= -b_{n-1}b_n^{-1}\pi_{n-1}(\lambda)\text{ and }
  \pi^+_1(\lambda)=1.
\end{equation*}
Therefore $C=-b_{n-1}b_n^{-1}\pi_{n-1}(\lambda)$.  Since
$\lambda\in\sigma_p(J_n^+)$, the vector $\pi^+(\lambda)$ is in
$\dom(J_n^+)$. Now, by Definition~\ref{def:submatrices}, equation
(\ref{eq:polynomialsj-j+}) implies that $\pi(\lambda)$ is in
$\dom(J)$, that is, $\pi(\lambda)\in\ker(J-\lambda I)$. Finally,
observe that $J_n^+$ and $J_n^-$ do not depend on the perturbation so
the result just proven holds for $\widetilde{J}_n$.
\end{proof}
\begin{proposition}
  \label{thm:gamma-in-both}
  For any $n\in\nats$
  \begin{equation*}
    \gamma\in\sigma(J)\Longleftrightarrow \gamma\in\sigma(\widetilde{J}_n)
  \end{equation*}
\end{proposition}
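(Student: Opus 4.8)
The plan is to separate the essential spectrum from the rest and then, off the essential spectrum, to split according to whether $\pi_n(\gamma)$ vanishes. First I would dispose of the essential part: since $\widetilde{J}_n$ differs from $J$ by a perturbation of rank at most three (see (\ref{eq:def-tilde-j})), hence by a compact operator, Weyl's theorem gives $\sigma_{ess}(J)=\sigma_{ess}(\widetilde{J}_n)$. Thus if $\gamma\in\sigma_{ess}(J)$ then $\gamma$ lies in both spectra and there is nothing to prove, so I may assume $\gamma\notin\sigma_{ess}(J)=\sigma_{ess}(\widetilde{J}_n)$. In that regime $\gamma\in\sigma(J)$ is equivalent to $\gamma$ being an isolated eigenvalue, i.e.\ to $\rho(\{\gamma\})>0$ (recall $\delta_1$ is cyclic, so the eigenvalues of $J$ are exactly the atoms of $\rho$); the same reasoning applies to $\widetilde{J}_n$, which is again a Jacobi operator with cyclic vector $\delta_1$, giving $\gamma\in\sigma(\widetilde{J}_n)\iff\widetilde{\rho}(\{\gamma\})>0$.

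A preliminary observation I would record is that $\pi_n(\gamma)=0$ if and only if $\widetilde{\pi}_n(\gamma)=0$. Indeed, for $n\ge 2$ the perturbation in (\ref{eq:def-tilde-j}) leaves the top-left $(n-1)\times(n-1)$ block of the matrix untouched, so the finite-dimensional operator $J_n^-$ of Definition~\ref{def:submatrices} is common to $J$ and $\widetilde{J}_n$; Corollary~\ref{cor:pi-k-zero-eigenvalue-j-} applied to both operators then identifies the zeros of $\pi_n$ and of $\widetilde{\pi}_n$ with $\sigma(J_n^-)$. (For $n=1$ both polynomials equal $1$, so the second case below is vacuous.) In the same vein I would note that the block $J_n^+$ is also common to the two operators: the perturbation alters neither the matrix entries indexed by $\{n+1,n+2,\dots\}$ nor the domain, so $J$ and $\widetilde{J}_n$ share the same boundary behaviour at infinity.

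In the case $\pi_n(\gamma)\ne 0$ (hence $\widetilde{\pi}_n(\gamma)\ne 0$), the condition $\rho(\{\gamma\})>0$ is equivalent to $a:=\pi_n^2(\gamma)\rho(\{\gamma\})\ne 0$, and $\widetilde{\rho}(\{\gamma\})>0$ to $\tilde a:=\widetilde{\pi}_n^2(\gamma)\widetilde{\rho}(\{\gamma\})\ne 0$. By Lemma~\ref{lem:residue} (and its verbatim analogue for $\widetilde{J}_n$, which follows from the integral representation of $\widetilde{G}(\cdot,n)$), $a=\lim_{z\to\gamma}(\gamma-z)G(z,n)$ and $\tilde a=\lim_{z\to\gamma}(\gamma-z)\widetilde{G}(z,n)$. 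Rearranging the master equation (\ref{eq:master}) as $G(z,n)=\widetilde{G}(z,n)\big[\theta^2+(1-\theta^2)(\gamma-z)G(z,n)\big]$, multiplying by $(\gamma-z)$, and letting $z\to\gamma$ yields the single scalar identity $a=\tilde a\,[\theta^2+(1-\theta^2)a]$. Since $\theta>0$, this forces $a=0\iff\tilde a=0$, which is precisely the equivalence $\rho(\{\gamma\})>0\iff\widetilde{\rho}(\{\gamma\})>0$ in this case.

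The remaining case $\pi_n(\gamma)=0$ (which requires $n\ge 2$) is where I expect the main obstacle: both residues $a,\tilde a$ now vanish, so the master equation no longer ``sees'' the atom and the argument above collapses. Here I would argue through the $J_n^{\pm}$ decomposition instead. Since $\pi_n(\gamma)=0$, Corollary~\ref{cor:pi-k-zero-eigenvalue-j-} gives $\gamma\in\sigma(J_n^-)$. If $\gamma\in\sigma_p(J)$, then $\pi_n(\gamma)=0$ together with Lemma~\ref{lem:pi-k-zero-eigenvalue-j-+} gives $\gamma\in\sigma_p(J_n^+)$, so $\gamma\in\sigma_p(J_n^+)\cap\sigma(J_n^-)$ and Proposition~\ref{prop:j+-j} immediately delivers $\gamma\in\sigma_p(\widetilde{J}_n)$. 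The converse is symmetric: if $\gamma\in\sigma_p(\widetilde{J}_n)$, then applying Lemma~\ref{lem:pi-k-zero-eigenvalue-j-+} to $\widetilde{J}_n$ (using $\widetilde{\pi}_n(\gamma)=0$) gives $\gamma\in\sigma_p(J_n^+)$—here one uses that $J_n^+$ is common to $J$ and $\widetilde{J}_n$—and Proposition~\ref{prop:j+-j} returns $\gamma\in\sigma_p(J)$. The delicate points to check carefully are the analogue of Lemma~\ref{lem:residue} for $\widetilde{J}_n$ and, above all, that $J_n^+$ and $J_n^-$, including their boundary conditions at infinity, really are unchanged by the perturbation, so that Corollary~\ref{cor:pi-k-zero-eigenvalue-j-}, Lemma~\ref{lem:pi-k-zero-eigenvalue-j-+}, and Proposition~\ref{prop:j+-j} may be applied to $J$ and $\widetilde{J}_n$ interchangeably.
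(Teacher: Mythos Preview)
Your proof is correct and follows the paper's route: invariance of $\sigma_{ess}$ under the finite-rank perturbation, then a case split on whether $\pi_n(\gamma)=0$, handling the zero case via the common blocks $J_n^\pm$ together with Corollary~\ref{cor:pi-k-zero-eigenvalue-j-}, Lemma~\ref{lem:pi-k-zero-eigenvalue-j-+}, and Proposition~\ref{prop:j+-j}, and the nonzero case via the master equation and Lemma~\ref{lem:residue}. The only difference is tactical in the nonzero case: the paper evaluates $\mathfrak{M}_n(\gamma)=\theta^2+(1-\theta^2)\pi_n^2(\gamma)\rho(\{\gamma\})$ and invokes the bound $0<\pi_n^2(\gamma)\rho(\{\gamma\})<1$ to conclude $\mathfrak{M}_n(\gamma)\ne 0$ (writing out only one implication, the converse being left to the symmetry between (\ref{eq:master}) and (\ref{eq:master2})), whereas your scalar identity $a=\tilde a\bigl[\theta^2+(1-\theta^2)a\bigr]$ delivers both implications simultaneously and sidesteps that bound.
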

\begin{proof}
  Let us prove that
  $\gamma\in\sigma(J)\Rightarrow\gamma\in\sigma(\widetilde{J}_n)$. Since
  $\sigma_{ess}(J)=\sigma_{ess}(\widetilde{J})$, it is sufficient to
  verify that
  $\gamma\in\sigma_{disc}(J)$
  implies $\gamma\in\sigma_{disc}(\widetilde{J}_n)$ (recall the
  notation introduced in Remark~\ref{rem:rephrase}). Since
  $\gamma$ is an isolated eigenvalue of $J$, $\gamma$ is either a zero
  or a pole of $G(z,n)$ as a consequence of
  Lemma~\ref{lem:eigenvalue-zero-or-pole}. If
  $G(\gamma,n)=0$, then Lemma~\ref{lem:eigenvalue-zero-or-pole}
  implies that $\pi_n(\gamma)=0$. Therefore, using
  Lemma~\ref{lem:pi-k-zero-eigenvalue-j-+} one concludes that $\gamma$
  is an isolated eigenvalue of $J_n^+$. Thus, taking into account
  Corollary~\ref{cor:pi-k-zero-eigenvalue-j-} and
  Proposition~\ref{prop:j+-j}, $\gamma$ is an eigenvalue of
  $\widetilde{J}_n$. If $\gamma$ is a pole of $G(z,n)$, then using
  Lemma~\ref{lem:residue} and (\ref{eq:master}) one has
  \begin{equation}
    \label{eq:M-gamma-last}
    \mathfrak{M}_n(\gamma)=\theta^2+(1-\theta^2)\pi_n^2(\gamma)\rho(\{\gamma\})\,.
  \end{equation}
 Thus, since $0<\pi_n^2(\gamma)\rho(\{\gamma\})<1$, the
 equality~(\ref{eq:M-gamma-last}) implies that
 $\mathfrak{M}_n(\gamma)\ne 0$. Then, by (\ref{eq:M-definition}),
 $\widetilde{G}(z,n)$ should have a pole in $\gamma$ which, in turn,
 implies the assertion.
\end{proof}

\begin{lemma}
  \label{lem:common-g-zero}
  If
  $r\in\sigma_p(J)\cap\sigma_p(\widetilde{J}_n)\setminus\left(\sigma_{ess}(J)\cup
  \{\gamma\}\right)  $, then $G(r,n)=0$
\end{lemma}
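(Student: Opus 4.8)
The plan is to argue by contradiction, exploiting the fact that both $J$ and $\widetilde{J}_n$ are Jacobi operators to which Lemma~\ref{lem:eigenvalue-zero-or-pole} applies, together with the master equation~(\ref{eq:master}). The structural observation driving the argument is that Lemma~\ref{lem:eigenvalue-zero-or-pole}, being a statement about an arbitrary Jacobi operator and its diagonal Green function at a fixed index, applies verbatim to $\widetilde{J}_n$ and $\widetilde{G}(\cdot,n)$: at an isolated eigenvalue the Green function must be \emph{either} a zero \emph{or} a pole, with no intermediate possibility. I will show that assuming $G(r,n)\ne0$ forces $\widetilde{G}(\cdot,n)$ to have a finite, nonzero value at $r$, which is incompatible with this dichotomy.

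First I would record the consequences of the hypotheses. Since $r\in\sigma_p(J)\setminus\sigma_{ess}(J)$, the point $r$ is an isolated eigenvalue of $J$; likewise, because $\sigma_{ess}(\widetilde{J}_n)=\sigma_{ess}(J)$ and $r\in\sigma_p(\widetilde{J}_n)$, the point $r$ is an isolated eigenvalue of $\widetilde{J}_n$. Now suppose, for contradiction, that $G(r,n)\ne0$. By Lemma~\ref{lem:eigenvalue-zero-or-pole} applied to $J$, the point $r$ is then not a zero of $\pi_n(\cdot)$, so $r$ is a pole of $G(\cdot,n)$; moreover, by Lemma~\ref{lem:residue}, writing $A:=\pi_n^2(r)\rho(\{r\})$, one has $\lim_{z\to r}(r-z)G(z,n)=A$, and $A>0$ because $\pi_n(r)\ne0$ while $r$ being an eigenvalue of the simple operator $J$ forces $\rho(\{r\})>0$. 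Thus $G(\cdot,n)$ has a genuine simple pole at $r$ with nonzero coefficient.

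Next I would feed this into the master equation. Since $\gamma$ is defined by~(\ref{eq:def-gamma}) we have $\theta^2\ne1$, and by hypothesis $\gamma\ne r$. Multiplying~(\ref{eq:master}) by $(r-z)$ and letting $z\to r$ gives
\begin{equation*}
  \lim_{z\to r}(r-z)\mathfrak{M}_n(z)=(1-\theta^2)(\gamma-r)\,A\ne0,
\end{equation*}
so $\mathfrak{M}_n$ has a simple pole at $r$ as well. Recalling $\mathfrak{M}_n=G(\cdot,n)/\widetilde{G}(\cdot,n)$ from~(\ref{eq:M-definition}), I would then write $\widetilde{G}(z,n)=(r-z)G(z,n)\big/(r-z)\mathfrak{M}_n(z)$ and pass to the limit, obtaining
\begin{equation*}
  \lim_{z\to r}\widetilde{G}(z,n)=\frac{1}{(1-\theta^2)(\gamma-r)},
\end{equation*}
a finite, nonzero number. (Equivalently, one may feed the simple pole of $\mathfrak{M}_n$ into~(\ref{eq:master2}) and read off the same limit.)

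This is the contradiction: since $r$ is an isolated eigenvalue of the Jacobi operator $\widetilde{J}_n$, Lemma~\ref{lem:eigenvalue-zero-or-pole} forces $r$ to be either a zero or a pole of $\widetilde{G}(\cdot,n)$, and a finite nonzero limit is neither. Hence the assumption $G(r,n)\ne0$ is untenable and $G(r,n)=0$. I expect the only delicate points to be the two book-keeping facts that make the argument rigorous: that the pole coefficient $A$ is nonzero, so that the quotient of leading terms is well defined, which rests on the simplicity of $J$ giving $\rho(\{r\})>0$; and the legitimacy of invoking Lemma~\ref{lem:eigenvalue-zero-or-pole} for $\widetilde{J}_n$, which is justified because $\widetilde{G}(\cdot,n)$ is precisely the diagonal Green function of $\widetilde{J}_n$ at the matching index $n$. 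The argument is uniform in $n\in\nats$, including $n=1$, since Lemmas~\ref{lem:residue} and~\ref{lem:eigenvalue-zero-or-pole} and equation~(\ref{eq:master}) are all valid there.
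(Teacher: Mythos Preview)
Your proof is correct and follows essentially the same approach as the paper: both hinge on the zero/pole dichotomy of Lemma~\ref{lem:eigenvalue-zero-or-pole} (applied to $J$ and to $\widetilde{J}_n$) together with the master equation~(\ref{eq:master}). The paper organizes the argument as a forward case split on whether $\widetilde{G}(r,n)=0$ (invoking Corollary~\ref{cor:g-gtilde-zero} in that case, and otherwise computing $\mathfrak{M}_n(r)$ as a finite ratio of residues via Lemma~\ref{lem:residue} to conclude from~(\ref{eq:master}) that $G(r,n)$ is finite), whereas you run the contrapositive by contradiction; the logical content is the same.
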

\begin{proof}
  Note that $r$ is an isolated common eigenvalue.  If
  $\widetilde{G}(r,n)=0$, then $G(r,n)=0$ by
  Corollary~\ref{cor:g-gtilde-zero}. If $\widetilde{G}(r,n)\ne 0$,
  then $\frak{M}(r)\in\reals$. Indeed, due to Lemma~\ref{lem:residue},
  one has
  \begin{equation}
    \label{eq:lim-finite}
    \lim_{z\to
      r}\frak{M}(z)=
\frac{\pi_n^2(r)\rho(\{r\})}{\pi_n^2(r)
\widetilde{\rho}(\{r\})}\,,
  \end{equation}
  where it has been used that the $n$-th polynomial of the first kind
  for $\widetilde{J}_n$ coincides with the one for $J$.  By
  Lemma~\ref{lem:eigenvalue-zero-or-pole}, $\pi_n(r)\ne 0$ and
  $r\in\sigma_p(\widetilde{J}_n)$ implies $\widetilde{\rho}(\{r\})\ne
  0$. Therefore, the denominator in the r.h.s. of
  (\ref{eq:lim-finite}) is different from zero. Now, since
  $\frak{M}(r)$ is finite and $r\ne\gamma$, (\ref{eq:master}) implies
  that $G(r,n)$ is finite. Finally, we recur to
  Lemma~\ref{lem:eigenvalue-zero-or-pole} to conclude that
  $G(r,n)=0$.
  \end{proof}
\begin{lemma}
  \label{lem:isolated-eigenvalue}
  Let $r\ne\gamma$ be an isolated eigenvalue of $J$.  Then
  $\rho_n(\{r\})=0$ if and only if
  $r\in\sigma(\widetilde{J}_n)\cap\sigma(J)$.
\end{lemma}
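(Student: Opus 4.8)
The plan is to reduce the equivalence to the single condition $\pi_n(r)=0$ and then dispatch the two implications with the results already established. Since $J$ is simple with cyclic vector $\delta_1$, every eigenvalue carries positive spectral weight, so $\rho(\{r\})>0$; together with $d\rho_n=\pi_n^2\,d\rho$ from (\ref{eq:rho-def}) this gives $\rho_n(\{r\})=\pi_n^2(r)\rho(\{r\})$ and hence
\[
\rho_n(\{r\})=0\iff\pi_n(r)=0 .
\]
By Lemma~\ref{lem:eigenvalue-zero-or-pole} the right-hand condition is in turn equivalent to $G(\cdot,n)$ having a zero (rather than a pole) at the isolated eigenvalue $r$. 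Moreover $r\in\sigma(J)$ by hypothesis, so the assertion to be proved is simply $\pi_n(r)=0\Longleftrightarrow r\in\sigma(\widetilde{J}_n)$.

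For the forward implication I would take $n\ge 2$ and assume $\pi_n(r)=0$. Then Corollary~\ref{cor:pi-k-zero-eigenvalue-j-} gives $r\in\sigma(J_n^-)$, and Lemma~\ref{lem:pi-k-zero-eigenvalue-j-+} (using that $r$ is an eigenvalue of $J$) gives $r\in\sigma_p(J_n^+)$. Proposition~\ref{prop:j+-j} then yields $r\in\sigma_p(J_n^+)\cap\sigma(J_n^-)\subset\sigma_p(\widetilde{J}_n)$, so $r\in\sigma(\widetilde{J}_n)$. The case $n=1$ is vacuous: there $\pi_1\equiv 1$, so $\rho_1(\{r\})=\rho(\{r\})>0$ and the hypothesis $\rho_1(\{r\})=0$ never holds.

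For the converse I would assume $r\in\sigma(\widetilde{J}_n)$. Because $\sigma_{ess}(\widetilde{J}_n)=\sigma_{ess}(J)$ and $r\notin\sigma_{ess}(J)$, the point $r$ lies in $\sigma_{disc}(\widetilde{J}_n)$ and is thus an isolated eigenvalue of $\widetilde{J}_n$ as well. Consequently $r\in\sigma_p(J)\cap\sigma_p(\widetilde{J}_n)\setminus(\sigma_{ess}(J)\cup\{\gamma\})$, and Lemma~\ref{lem:common-g-zero} applies to give $G(r,n)=0$. Since $r$ is an isolated eigenvalue of $J$, Lemma~\ref{lem:eigenvalue-zero-or-pole} converts the vanishing of $G(\cdot,n)$ at $r$ into $\pi_n(r)=0$, whence $\rho_n(\{r\})=0$. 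This half is uniform in $n$, and for $n=1$ it merely records that both sides of the equivalence are then impossible.

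The step I expect to be the crux is the forward implication, and specifically the observation that the master equation (\ref{eq:master}) alone does not suffice: evaluating it at $r$ only produces $\widetilde{G}(r,n)=0$, and a zero of the Herglotz function $\widetilde{G}(\cdot,n)$ need not be a point of $\sigma(\widetilde{J}_n)$. The spectral membership has to be extracted through the truncations $J_n^{\pm}$, which is exactly what Proposition~\ref{prop:j+-j} supplies; this is why the argument must detour through $J_n^+$ and $J_n^-$ rather than staying at the level of Green functions.
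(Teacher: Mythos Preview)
Your argument is correct and matches the paper's proof essentially step for step: the forward direction goes from $\pi_n(r)=0$ through Corollary~\ref{cor:pi-k-zero-eigenvalue-j-} and Lemma~\ref{lem:pi-k-zero-eigenvalue-j-+} (the paper compresses these into a single citation of Remark~\ref{rem:rephrase}) to Proposition~\ref{prop:j+-j}, and the converse uses Lemma~\ref{lem:common-g-zero} followed by Lemma~\ref{lem:eigenvalue-zero-or-pole}. Your explicit treatment of the vacuous $n=1$ case and the remark that $\sigma_{ess}(\widetilde{J}_n)=\sigma_{ess}(J)$ are minor clarifications the paper leaves implicit, and your closing paragraph on why the master equation alone is insufficient is useful commentary but not needed for the proof itself.
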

\begin{proof}
  Suppose that $\rho_n(\{r\})=0$ and $r\in\sigma_p(J)$. Then
  $\pi_n(r)=0$ (see (\ref{eq:rho-def}). By Remark~\ref{rem:rephrase},
  one has that $r\in\sigma_p(J_n^+)$. Now, Proposition~\ref{prop:j+-j}
  implies that $r\in\sigma_p(\widetilde{J})$. Let us show that if 
  $r\in\sigma_p(\widetilde{J}_n)$ then $\rho_n(\{r\})=0$. Since $r$ is an
  isolated common eigenvalue, then by Lemma~\ref{lem:common-g-zero}
  $G(r,n)=0$, Therefore, by
  Lemma~\ref{lem:eigenvalue-zero-or-pole}, $\pi_n(r)=0$ which
  in turn implies $\rho_n(\{r\})=0$
\end{proof}
\begin{lemma}
  \label{lem:m0-jtilde-minusj}
  Assume $r\not\in\sigma_{ess}(J)$. Then,
  $r\in\sigma(\widetilde{J}_n)\setminus \sigma(J)$ implies
  $\mathfrak{M}_n(r)=0$. Conversely, if $r\ne\gamma$ and
  $\mathfrak{M}_n(r)=0$, then $ r\in\sigma(\widetilde{J}_n)\setminus
  \sigma(J)$.
\end{lemma}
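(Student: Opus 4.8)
The plan is to prove the two implications separately, in each case combining the pole/zero dichotomy for the Green functions recorded in Lemma~\ref{lem:eigenvalue-zero-or-pole} with the two forms \eqref{eq:master} and \eqref{eq:master2} of the master equation. Throughout, $\theta\neq1$ is implicit, since $\gamma$ is only defined in that case. Two preliminary observations will be used repeatedly. First, because $\widetilde{J}_n$ is a finite rank perturbation of $J$ one has $\sigma_{ess}(J)=\sigma_{ess}(\widetilde{J}_n)$; hence, under the standing hypothesis $r\notin\sigma_{ess}(J)$, every point of $\sigma(J)\cup\sigma(\widetilde{J}_n)$ close to $r$ is an isolated eigenvalue, both $G(\cdot,n)$ and $\widetilde{G}(\cdot,n)$ are meromorphic near $r$, and $\mathfrak{M}_n(r)$ is a well-defined value of a meromorphic function. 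Second, in the forward implication $r\neq\gamma$ comes for free: if $r\in\sigma(\widetilde{J}_n)\setminus\sigma(J)$ and $r=\gamma$, then Proposition~\ref{thm:gamma-in-both} would force $\gamma\in\sigma(J)$, a contradiction.

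For the forward implication I would argue as follows. Since $r\notin\sigma(J)$, the function $G(\cdot,n)$ is analytic at $r$, while $r$ is an isolated eigenvalue of $\widetilde{J}_n$; by Lemma~\ref{lem:eigenvalue-zero-or-pole} applied to $\widetilde{J}_n$, $r$ is thus a zero or a pole of $\widetilde{G}(\cdot,n)$ according to whether $\pi_n(r)=0$ or $\pi_n(r)\neq0$ (the $n$-th polynomial of the first kind of $\widetilde{J}_n$ is a nonzero multiple of $\pi_n$, hence has the same zeros). The crucial step is to rule out the zero case. If $\pi_n(r)=0$ then $r\in\sigma(J_n^-)$ by Corollary~\ref{cor:pi-k-zero-eigenvalue-j-}, while Lemma~\ref{lem:pi-k-zero-eigenvalue-j-+} applied to $\widetilde{J}_n$ gives $r\in\sigma_p(J_n^+)$ (using that $J_n^\pm$ do not depend on the perturbation); Proposition~\ref{prop:j+-j} then yields $r\in\sigma_p(J)$, contradicting $r\notin\sigma(J)$. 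For $n=1$ this step is vacuous, as $\pi_1\equiv1$. Hence $\pi_n(r)\neq0$, so $\widetilde{G}(\cdot,n)$ has a pole at $r$; letting $\widetilde{G}(z,n)\to\infty$ in \eqref{eq:master2} and using $r\neq\gamma$ and $\theta\neq1$ (so that the coefficient $(1-\theta^{-2})(\gamma-r)$ is nonzero) forces $1/\mathfrak{M}_n(z)\to\infty$, that is, $\mathfrak{M}_n(r)=0$.

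For the converse, I would first show $r\notin\sigma(J)$ and then deduce $r\in\sigma(\widetilde{J}_n)$. Suppose $r$ were an isolated eigenvalue of $J$. By Lemma~\ref{lem:eigenvalue-zero-or-pole} it would be a pole of $G(\cdot,n)$ when $\pi_n(r)\neq0$ and a zero when $\pi_n(r)=0$. In the pole case, substituting the local expansion from Lemma~\ref{lem:residue} into \eqref{eq:master} and using $\pi_n^2(r)\rho(\{r\})\neq0$ together with $r\neq\gamma$ gives $\mathfrak{M}_n(z)\to\infty$; in the zero case $G(r,n)=0$, so \eqref{eq:master} gives $\mathfrak{M}_n(r)=\theta^2\neq0$. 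Either possibility contradicts $\mathfrak{M}_n(r)=0$, so $r\notin\sigma(J)$ and $G(\cdot,n)$ is analytic at $r$. Finally, $\mathfrak{M}_n(r)=0$ means $1/\mathfrak{M}_n(z)\to\infty$, and reading \eqref{eq:master2} exactly as before (again with $r\neq\gamma$, $\theta\neq1$) forces $\widetilde{G}(\cdot,n)$ to have a pole at $r$, i.e.\ $r\in\sigma(\widetilde{J}_n)$. Together with $r\notin\sigma(J)$ this gives $r\in\sigma(\widetilde{J}_n)\setminus\sigma(J)$.

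The step I expect to be the main obstacle is the exclusion of $\pi_n(r)=0$ in the forward implication: this is exactly the configuration in which $\widetilde{G}(\cdot,n)$ develops a zero, rather than a pole, at the eigenvalue $r$, so the master equation by itself does not settle the value of $\mathfrak{M}_n(r)$. Overcoming it requires the structural information about the truncations $J_n^\pm$---their independence of the perturbation and the inclusion $\sigma_p(J_n^+)\cap\sigma(J_n^-)\subset\sigma_p(J)$ provided by Proposition~\ref{prop:j+-j}. Once this alternative is pinned down, everything else reduces to routine limit computations with \eqref{eq:master}, \eqref{eq:master2} and the residue formula of Lemma~\ref{lem:residue}.
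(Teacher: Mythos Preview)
Your proof is correct and follows essentially the same route as the paper's: in both directions the argument hinges on the pole/zero dichotomy of Lemma~\ref{lem:eigenvalue-zero-or-pole} combined with the master equations \eqref{eq:master}--\eqref{eq:master2}. Two minor presentational differences: for the forward implication the paper computes $\mathfrak{M}_n(r)$ directly as the ratio $\pi_n^2(r)\rho(\{r\})/\pi_n^2(r)\widetilde{\rho}(\{r\})$ via Lemma~\ref{lem:residue}, whereas you pass through \eqref{eq:master2}; and for the converse the paper concludes $\widetilde{G}(r,n)=\infty$ from the definition \eqref{eq:M-definition} rather than from \eqref{eq:master2}. Your handling of the exclusion of $\pi_n(r)=0$ is in fact more complete than the paper's bare citation of Lemma~\ref{lem:pi-k-zero-eigenvalue-j-+}: spelling out the chain through Corollary~\ref{cor:pi-k-zero-eigenvalue-j-} and Proposition~\ref{prop:j+-j} (and noting that $J_n^{\pm}$ are unaffected by the perturbation) is exactly what is needed.
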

\begin{proof}
  We begin by proving the first part. By hypothesis $\rho(\{r\})=0$,
  hence $\pi_n^2(r)\rho(\{r\})=0$. Let us show that
  $\pi_n^2(r)\widetilde{\rho}(\{r\})\ne 0$. Indeed, by
  Lemma~\ref{lem:pi-k-zero-eigenvalue-j-+}, $\pi_n(r)\ne
  0$, and $\widetilde{\rho}(\{r\})\ne 0$ since $r$ is an isolated
  element in $\sigma(\widetilde{J}_n)$. Thus, using
  Lemma~\ref{lem:residue}, one obtains
  \begin{equation*}
    \mathfrak{M}_n(r)=\frac{\rho(\{r\})}{\widetilde{\rho}(\{r\})}=0.
  \end{equation*}
  Let us prove the second part. If $r\ne\gamma$ and
  $\mathfrak{M}_n(r)=0$, then, by (\ref{eq:master}), $G(r,n)$ cannot be
  $0$ nor $\infty$ and, therefore, due to
  Proposition~\ref{lem:eigenvalue-zero-or-pole}
  $r\not\in\sigma(J)$. Now, according to (\ref{eq:M-definition}),
  $\mathfrak{M}_n(r)=0$ and $G(r,n)\in\reals\setminus\{0\}$ implies that
  $\widetilde{G}(r,n)=\infty$. Finally, since
  $r\not\in\sigma_{ess}(\widetilde{J}_n)$, one uses
  (\ref{eq:nevanlinna-modified}) to verify that
  $r\in\sigma(\widetilde{J}_n)$.
\end{proof}
The next theorem describes what happens to the spectrum of $J$ when we
perturb it as indicated in (\ref{eq:def-tilde-j}). It states roughly,
that between $\gamma$ (see (\ref{eq:def-gamma})) and an eigenvalue
of $J$ there is exactly one eigenvalue of the perturbation
$\widetilde{J}_n$ and there may be at most one common eigenvalue of
$J$ and $\widetilde{J}_n$. This joint eigenvalue is closer to $\gamma$
than the eigenvalue of $\widetilde{J}_n$ which is not shared by
$J$. Under the perturbation (\ref{eq:def-tilde-j}), the point $\gamma$
acts as an ``atractor'' of eigenvalues. The precise statement is as
follows:
\begin{theorem}
  \label{thm:interlacing}
  Fix an arbitrary $n\in\nats$.
  Let $\theta<1$ and $a,b$ be in
  $\sigma_p(J)\setminus\sigma_p(\widetilde{J}_n)$. Define
  $\mathcal{A}:=(-\infty,\gamma)\cap(a,b)$ with $\gamma>a$, where
  $\gamma$ is defined in (\ref{eq:def-gamma}). Assume
  \begin{enumerate}[(i)]
  \item $(a,b)\cap\sigma_{ess}(J)=\emptyset$
  \item $(a,b)\cap\sigma_p(J)\setminus\sigma_p(\widetilde{J}_n)=\emptyset$
  \end{enumerate}
  Then there
  exists a unique
  $\mu\in\mathcal{A}\cap\sigma_p(\widetilde{J}_n)\setminus\sigma_p(J)$
  and, if
  $\mathcal{A}\cap\sigma_p(J)\cap\sigma_p(\widetilde{J}_n)\ne\emptyset$,
  there exists at most one
  $\eta\in\mathcal{A}\cap\sigma_p(J)\cap\sigma_p(\widetilde{J}_n)$. Moreover
  \begin{equation}
    \label{eq:aux-inequality}
    \abs{\eta-\gamma}<\abs{\mu-\gamma}\,.
  \end{equation}
 The analogous assertion holds for
 $\mathcal{B}:=(a,b)\cap(\gamma,\infty)$, with $\gamma<b$,
 instead of $\mathcal{A}$.
\end{theorem}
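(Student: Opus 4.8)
The plan is to reduce the whole statement to the behavior on the real axis of the single Herglotz function $G(\cdot,n)$ and then read off the spectrum of $\widetilde{J}_n$ from the master equation (\ref{eq:master}). First I would analyze $G(\cdot,n)$ on the interval $\mathcal{A}$. The endpoints $a,b$ are isolated eigenvalues of $J$ distinct from $\gamma$ (by hypothesis (i) and Proposition~\ref{thm:gamma-in-both}), so Lemma~\ref{lem:isolated-eigenvalue} gives $\rho_n(\{a\}),\rho_n(\{b\})\neq 0$, and Lemma~\ref{lem:eigenvalue-zero-or-pole} makes both poles of $G(\cdot,n)$, with $G(x,n)\to-\infty$ as $x\to a^+$. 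On the other hand, every interior point of $\mathcal{A}$ lying in $\sigma(J)$ is isolated (by (i)) and satisfies $r<\gamma$; by (ii) it must be a common eigenvalue, so Lemma~\ref{lem:isolated-eigenvalue} forces $\rho_n(\{r\})=0$. Hence $\rho_n$ carries no mass in the interior of $\mathcal{A}$, so $G(\cdot,n)$ is real-analytic there with $G'(x,n)=\int_\reals(t-x)^{-2}\,d\rho_n(t)>0$; in particular $G(\cdot,n)$ is strictly increasing on $\mathcal{A}$ and has at most one zero $x_0$.

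This disposes of the common eigenvalue at once. Any $\eta\in\mathcal{A}\cap\sigma_p(J)\cap\sigma_p(\widetilde{J}_n)$ is an isolated common eigenvalue, so by Lemma~\ref{lem:isolated-eigenvalue} and Lemma~\ref{lem:eigenvalue-zero-or-pole} it satisfies $G(\eta,n)=0$; strict monotonicity then forces $\eta=x_0$, giving at most one such $\eta$.

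For $\mu$ I would invoke Lemma~\ref{lem:m0-jtilde-minusj}, which identifies the points of $\mathcal{A}\cap\sigma_p(\widetilde{J}_n)\setminus\sigma_p(J)$ (all $\neq\gamma$) with the zeros of $\mathfrak{M}_n$ in $\mathcal{A}$, and then locate those zeros through (\ref{eq:master}). Since $\theta<1$ we have $1-\theta^2>0$, so $\mathfrak{M}_n(z)=\theta^2+(1-\theta^2)(\gamma-z)G(z,n)=0$ forces $(\gamma-z)G(z,n)=-\theta^2/(1-\theta^2)<0$; as $\gamma-z>0$ throughout $\mathcal{A}\subset(-\infty,\gamma)$, such a zero can occur only where $G(z,n)<0$, i.e. on $(a,x_0)$ (or on all of $\mathcal{A}$ if $G<0$ there). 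On that region $-G(z,n)>0$ and $(\gamma-z)G'(z,n)>0$, so $\mathfrak{M}_n'(z)=(1-\theta^2)\bigl[-G(z,n)+(\gamma-z)G'(z,n)\bigr]>0$, and $\mathfrak{M}_n$ increases strictly from $-\infty$ (at $a^+$) to $\theta^2>0$ (reached either at $x_0$, where $G=0$, or as $z\to\gamma^-$, where $(\gamma-z)G(z,n)\to 0$). Thus there is exactly one zero $\mu$, and where $G(z,n)>0$ one checks $\mathfrak{M}_n>\theta^2>0$, so no further zeros arise; by Lemma~\ref{lem:m0-jtilde-minusj}, $\mu\in\sigma_p(\widetilde{J}_n)\setminus\sigma_p(J)$.

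The inequality is then automatic: the construction places $\mu$ in the region $G<0$, so $\mu<x_0$; when $\eta$ exists it equals $x_0\in\mathcal{A}$, whence $a<\mu<\eta<\gamma$ and $\abs{\eta-\gamma}=\gamma-\eta<\gamma-\mu=\abs{\mu-\gamma}$. The statement for $\mathcal{B}=(a,b)\cap(\gamma,\infty)$ follows from the mirror-image computation: there $\gamma-z<0$, the zeros of $\mathfrak{M}_n$ live where $G(z,n)>0$, $\mathfrak{M}_n$ is strictly decreasing from $\theta^2$ to $-\infty$ as $z\to b^-$, and one gets $\eta=x_0<\mu$ with the same conclusion. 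I expect the main obstacle to be the simultaneous existence and uniqueness of $\mu$, that is, controlling the sign of $\mathfrak{M}_n'$ on precisely the subinterval $\{G<0\}$ and the boundary limits there; this is exactly where $\theta<1$ is indispensable and where a careless accounting of the signs of $\gamma-z$ and $G(\cdot,n)$ (including the delicate case where $\gamma$ is itself an atom of $\rho_n$ at the right end of $\mathcal{A}$) would break the argument.
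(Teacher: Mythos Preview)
Your argument is correct and follows the same overall strategy as the paper: establish $\rho_n(\mathcal{A})=0$ via Lemma~\ref{lem:isolated-eigenvalue}, use the resulting monotonicity of $G(\cdot,n)$ to pin down the (at most one) common eigenvalue $\eta$ as the zero of $G(\cdot,n)$, identify the non-common eigenvalue $\mu$ as the zero of $\mathfrak{M}_n$ via Lemma~\ref{lem:m0-jtilde-minusj}, and read off $\mu<\eta$ from the signs $G(\mu,n)<0$, $G(\eta,n)=0$.

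The one technical difference is in how the unique zero of $\mathfrak{M}_n$ in $\mathcal{A}$ is located. You compute $\mathfrak{M}_n'$ directly from the master equation and track signs on $\{G<0\}$. The paper instead rewrites
\[
\frac{\mathfrak{M}_n(z)}{(1-\theta^2)(\gamma-z)}=\int_\reals\frac{d\omega(t)}{t-z},\qquad
d\omega:=d\rho_n+\frac{\theta^2}{1-\theta^2}\,\delta_\gamma,
\]
so that the whole analysis of $\mathfrak{M}_n$'s zeros reduces to Corollary~\ref{cor:borel-transform} applied to the Borel transform of $\omega$ (with $\omega(\mathcal{A})=0$, $\omega(\{a\})\neq0$, and $\omega(\{\gamma\})\neq0$ automatic from the added Dirac mass). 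This device absorbs all the endpoint and sign bookkeeping---including the delicate case you flag, where $\gamma$ is an atom of $\rho_n$ at the right edge of $\mathcal{A}$---into a single application of the appendix result, whereas your route handles those cases by hand. Both arguments are equally valid; the $\omega$-trick is shorter, while yours is more self-contained.

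One small remark: the hypotheses allow $a,b\in\sigma_{ess}(J)$ (cf.\ Remark~\ref{rem:observation}), so Lemmas~\ref{lem:eigenvalue-zero-or-pole} and~\ref{lem:isolated-eigenvalue} do not literally apply at the endpoints. The conclusions you need there ($\rho_n(\{a\})\neq0$, $G(x,n)\to-\infty$ as $x\to a^+$) still hold, but come directly from the integral representation of $G(\cdot,n)$ and the forward direction of the proof of Lemma~\ref{lem:isolated-eigenvalue}, which does not use isolation; the paper's use of the appendix theorem sidesteps this issue.
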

\begin{remark}
  \label{rem:observation}
  Observe that the theorem requires that $a,b$ are in $\sigma_p(J)$,
  but not necessarily in $\sigma_{disc}(J)$.
\end{remark}
\begin{proof}
 Step 1. Non common eigenvalues.

 It follows from (\ref{eq:green-integral}) and
  (\ref{eq:master}) that
\begin{equation}
  \label{eq:aux1-interlacing}
  \frac{\mathfrak{M}_n(z)}{(1-\theta^2)(\gamma-z)}=
  \int_\reals\frac{d\rho_n(t)}{t-z}+
\frac{\theta^2}{(1-\theta^2)(\gamma-z)}=\int_\reals\frac{d\omega(t)}{t-z}\,,
\end{equation}
where
\begin{equation*}
  d\omega(t):=d\rho_n(t)+\frac{\theta^2}{1-\theta^2}dh(t)\quad h(t):=
  \begin{cases}
    0 & t<\gamma\\
    1 & t\ge \gamma
  \end{cases}
\end{equation*}
By Lemma~\ref{lem:isolated-eigenvalue}, one has
\begin{equation}
  \label{eq:rhoA}
  \rho_n(\mathcal{A})=\inner{\delta_n}{E_J(\mathcal{A})\delta_n}=0
\end{equation}
which implies that $\omega(\mathcal{A})=0$.  Now, since
$a\in\sigma_p(J)\setminus\sigma_p(\widetilde{J}_n)$, one concludes that
$\rho(\{a\})\ne0$ and, using again
Lemma~\ref{lem:isolated-eigenvalue}, that
$\pi_n^2(a)\ne 0$. Therefore, taking into account
\begin{equation}
  \label{eq:rhoa}
  \rho_n(\{a\})=\int_{\{a\}}\pi^2_n(t)d\rho(t)=\pi_n^2(a)\rho(\{a\})\ne
  0\,,
\end{equation} one obtains
$\omega(\{a\})\ne0$.
Analogously, $\omega(\{b\})\ne0$. On the other hand $\omega(\{\gamma\})
\ne 0$ by the definition of $\omega$.

Then, according to Corollary~\ref{cor:borel-transform}, the function
\begin{equation*}
  \int_\reals\frac{d\omega(t)}{t-x}\,,\qquad x\in\reals
\end{equation*}
has precisely one zero in $\mathcal{A}$, that is, $\mathfrak{M}_n(z)$
has one zero in $\mathcal{A}$. Thus, from
Lemma~\ref{lem:m0-jtilde-minusj}, it follows that there is exactly one
eigenvalue of $\widetilde{J}_n$ in $\mathcal{A}$ which is not an
eigenvalue of $J$.

Step 2. Common  eigenvalues.

If $\gamma>b$, that is if $\mathcal{A}=(a,b)$ or if $\gamma\in (a,b)$
but $\gamma\not\in \sigma_p(J) $ then (\ref{eq:rhoA}) holds. In the
first case (\ref{eq:rhoA}) follows by Lemma
\ref{lem:isolated-eigenvalue} as above and in the second it follows
taking into account that
 $$\rho_n(\{\gamma \})=\int_{\{\gamma \}}\pi^2_n(t)d\rho(t)=\pi_n^2(\gamma )\rho(\{\gamma \})=0\,.$$
As in (\ref {eq:rhoa}) we have $\rho_n(\{a\})\ne0$ and $\rho_n(\{ b
 \})\ne0$

  Then using Corollary~\ref{cor:borel-transform} we obtain that
$$ G(x,n)=\int_\reals\frac{\pi_n^2(t)d\rho(t)}{t-x}$$
has exactly one zero $\eta \in (a,b)$. By Lemma \ref
{lem:common-g-zero} this is the only point which may be a common
eigenvalue of $J$ and $\tilde J$, different from $\gamma $. Therefore
there is at most one point in
$\mathcal{A}\cap\sigma_p(J)\cap\sigma_p(\widetilde{J}_n)$.

If $\gamma \in (a,b) \cap \sigma (J)$ then $\rho (\{ \gamma \})\ne0 $
and $\rho_n(\{\gamma \})=0$ if and only if $\pi_n(\gamma )=0$ If
$\pi_n(\gamma )=0$ then $\rho_n((a,b))=0$ and there is only one root
of $G(x,n)$ in $(a,b)$, so at most one common eigenvalue of $J$ and
$\tilde J$ in $(a,b)$.  If $\pi_n(\gamma )\ne0$ then $\rho_n(\{\gamma
\})\ne0$. Since $\rho_n((a,\gamma )=0=\rho_n(\gamma ,b)$ applying
Corollary~\ref{cor:borel-transform} we get one zero of $G(.,n)$ in
each interval $(a,\gamma )$ and $(\gamma ,b)$. These zeros are, as
before, the only possibilities of common eigenvalues in these
intervals.

Step 3. Proof of inequality (\ref{eq:aux-inequality})

Let $\mu ,\eta \in \mathcal{A}$ be such that $\mu$ is eigenvalue of
$\tilde J$ and not eigenvalue of $J$ and $\eta$ is a common
eigenvalue.

According to (\ref {eq:aux1-interlacing}) $\mu$ satisfies
\begin{equation}
\label{eq:g-noncommon-eigenvalue}
G(\mu ,n)=\int_\reals\frac{d\rho_n(t)}{t-\mu }=
\frac{\theta^2}{(\theta^2-1)(\gamma-\mu )}<0
\end{equation}
and by Lemma \ref
{lem:common-g-zero} we know
\begin{equation}
\label{eq:g-common-eigenvalue}
G(\eta  ,n)=0
\end{equation}

Since $G(.,n)$ is strictly increasing in $\mathcal{A}$ we get $\mu
<\eta $ and therefore
$$|\mu -\gamma |>|\eta -\gamma |$$
 A similar proof works for $\mu ,\eta \in \mathcal{B}$.


\end{proof}
The next theorem considers the situation when $\gamma$ is below the
bottom of the spectrum of $J$. It happens that between $\gamma$ and an
eigenvalue of $J$ there is exactly one eigenvalue of the perturbed
operator $\widetilde{J}_n$. In this interval there are no common
eigenvalues if $\gamma$ is not an eigenvalue.  To the left of $\gamma$
there is no spectrum.
\begin{theorem}
  \label{thm:interlacing-infinity}
  Fix an arbitrary $n\in\nats$.  Let $\theta<1$, $a=-\infty$, and $b$
  be in $\sigma_p(J)\setminus\sigma_p(\widetilde{J}_n)$. Consider
  $\mathcal{A},\mathcal{B}$ as were defined in
  Theorem~\ref{thm:interlacing} and assume the conditions (i), (ii) of
  that theorem. Then there is no eigenvalue of $\widetilde{J}_n$ in
  $\mathcal{A}$. If $\mathcal{B}\ne\emptyset$, then
  $\mathcal{B}\cap\sigma_p(\widetilde{J}_n)\setminus\sigma_p(J)$ has
  precisely one element. If $\gamma\not\in\sigma_p(J)$, then the set
  $\sigma_p(\widetilde{J}_n)\cap\sigma_p(J)\cap (-\infty,b)$ is
  empty. A similar result holds when
  $a\in\sigma_p(J)\setminus\sigma_p(\widetilde{J}_n)$ and $b=\infty$
  just interchanging $\mathcal{A}$ and $\mathcal{B}$.
\end{theorem}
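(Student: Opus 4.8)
The plan is to follow the architecture of the proof of Theorem~\ref{thm:interlacing}, replacing the atom that the measure carried at the finite left endpoint $a$ by the decay of the relevant Herglotz function at $-\infty$. Throughout one keeps in mind that $\theta<1$ forces $1-\theta^2>0$ and $\frac{\theta^2}{1-\theta^2}>0$. The first task is to locate the mass of $\rho_n$ inside $(-\infty,b)$. By hypothesis (i) there is no essential spectrum there, so $\rho$ (and hence $\rho_n=\pi_n^2\,d\rho$) is purely atomic on $(-\infty,b)$, supported on isolated eigenvalues of $J$. By hypothesis (ii) every such eigenvalue $r\neq\gamma$ of $J$ also lies in $\sigma_p(\widetilde{J}_n)$, so Lemma~\ref{lem:isolated-eigenvalue} gives $\rho_n(\{r\})=0$. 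Consequently $\rho_n$ has no mass in $(-\infty,\min\{\gamma,b\})$, while the atom at $b$ has weight $\rho_n(\{b\})=\pi_n^2(b)\rho(\{b\})\neq0$ because $b\in\sigma_p(J)\setminus\sigma_p(\widetilde{J}_n)$ (again Lemma~\ref{lem:isolated-eigenvalue}).

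For the interval $\mathcal{A}=(-\infty,\min\{\gamma,b\})$ I would argue by direct sign analysis rather than by counting zeros. For $x\in\mathcal{A}$ every point carrying mass of $\rho_n$ satisfies $t\ge\min\{\gamma,b\}>x$, so $G(x,n)=\int d\rho_n(t)/(t-x)>0$, the strict inequality coming from the atom at $b$. Feeding this into the master equation (\ref{eq:master}) and using $\gamma-x>0$ and $1-\theta^2>0$ yields $\mathfrak{M}_n(x)=\theta^2+(1-\theta^2)(\gamma-x)G(x,n)>\theta^2>0$ on $\mathcal{A}$. Since $G(\cdot,n)$ never vanishes on $\mathcal{A}$, Lemma~\ref{lem:common-g-zero} excludes common eigenvalues there, and since $\mathfrak{M}_n$ never vanishes, the converse part of Lemma~\ref{lem:m0-jtilde-minusj} excludes eigenvalues of $\widetilde{J}_n$ lying off $\sigma(J)$; hence $\widetilde{J}_n$ has no eigenvalue in $\mathcal{A}$.

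For $\mathcal{B}$ (nonempty exactly when $\gamma<b$, in which case $\mathcal{B}=(\gamma,b)$) I would revert to the positive measure $\omega$ of (\ref{eq:aux1-interlacing}), for which $\int d\omega(t)/(t-x)=\mathfrak{M}_n(x)/[(1-\theta^2)(\gamma-x)]$. Here $\omega$ carries an atom at $\gamma$ of weight $\rho_n(\{\gamma\})+\frac{\theta^2}{1-\theta^2}>0$ and the atom of weight $\rho_n(\{b\})\neq0$ at $b$, and by the first paragraph no mass in between. Corollary~\ref{cor:borel-transform} then produces exactly one zero of the Borel transform in $(\gamma,b)$, that is, exactly one zero $\mu$ of $\mathfrak{M}_n$ there; by Lemma~\ref{lem:m0-jtilde-minusj} this $\mu$ is the unique element of $\mathcal{B}\cap\sigma_p(\widetilde{J}_n)\setminus\sigma_p(J)$. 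For the statement on common eigenvalues, assume $\gamma\notin\sigma_p(J)$; then $\rho_n(\{\gamma\})=\pi_n^2(\gamma)\rho(\{\gamma\})=0$, so combined with the first paragraph $\rho_n$ has no mass at all in $(-\infty,b)$, whence $G(x,n)>0$ for every $x\in(-\infty,b)$ and Lemma~\ref{lem:common-g-zero} leaves no room for a common eigenvalue.

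The main obstacle is the $\mathcal{A}$ part: in Theorem~\ref{thm:interlacing} the boundary atom at the finite $a$ drove the Borel transform from $-\infty$ to $+\infty$ and thereby forced a zero, whereas with $a=-\infty$ that atom is gone and one must instead verify that $G(\cdot,n)$ (equivalently $\mathfrak{M}_n$) keeps a strict sign all the way down to $-\infty$; the decay $\int d\rho_n(t)/(t-x)\to0^+$ as $x\to-\infty$ together with the monotonicity of the transform is what guarantees the absence of a zero, and hence of an eigenvalue, in $\mathcal{A}$. The final assertion, with $a\in\sigma_p(J)\setminus\sigma_p(\widetilde{J}_n)$ and $b=\infty$, follows by the mirror-image argument (formally, the reflection $x\mapsto-x$), the roles of $\mathcal{A}$ and $\mathcal{B}$ being interchanged.
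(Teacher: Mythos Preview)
Your proof is correct and follows essentially the same architecture as the paper's: positivity of $G(\cdot,n)$ on $\mathcal{A}$ (forced by $\rho_n((-\infty,\min\{\gamma,b\}))=0$) to exclude all eigenvalues of $\widetilde{J}_n$ there, and a zero-counting argument for $\mathfrak{M}_n$ on $\mathcal{B}$.

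There is one genuine, if minor, difference worth noting. For the existence and uniqueness of $\mu\in\mathcal{B}\cap\sigma_p(\widetilde{J}_n)\setminus\sigma_p(J)$, the paper argues directly with $G(\cdot,n)$ and splits into the two cases $\rho_n(\{\gamma\})=0$ and $\rho_n(\{\gamma\})\neq0$: in the first case $G$ is finite and positive near $\gamma$ and one compares it with the decreasing function $f(\lambda)=\theta^2/[(\theta^2-1)(\gamma-\lambda)]$; in the second case $G$ runs from $-\infty$ to $+\infty$ on $(\gamma,b)$ and one again intersects with $f$. Your route avoids this bifurcation altogether by returning to the auxiliary measure $\omega$ of (\ref{eq:aux1-interlacing}) and applying Corollary~\ref{cor:borel-transform} directly: the added point mass $\theta^2/(1-\theta^2)$ at $\gamma$ guarantees $\omega(\{\gamma\})>0$ regardless of whether $\rho_n(\{\gamma\})$ vanishes, so the boundary hypotheses of Corollary~\ref{cor:borel-transform} are met in one stroke. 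This is a tidy unification; the paper's case split, on the other hand, makes the distinction between ``$\gamma\in\sigma_p(J)$'' and ``$\gamma\notin\sigma_p(J)$'' visible already at the level of the $\mathcal{B}$ analysis, which dovetails more explicitly with the final claim about common eigenvalues. One small wording issue: your phrase ``by the first paragraph no mass in between'' points to the conclusion $\rho_n((-\infty,\min\{\gamma,b\}))=0$, whereas what you actually need (and what your first paragraph does establish) is that every isolated eigenvalue $r\neq\gamma$ of $J$ in $(-\infty,b)$ has $\rho_n(\{r\})=0$, hence also $\rho_n((\gamma,b))=0$.
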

\begin{proof}
  The proof is carried out for the case $\gamma<b$.
  By (\ref{eq:rhoA}) one has that
  $\rho_n(-\infty,\gamma)=0$. Hence
  \begin{equation*}
    G(\lambda,n)=\int_\gamma^\infty\frac{d\rho_n(t)}{t-\lambda}\,.
  \end{equation*}
  For any $\lambda\in\mathcal{A}$, and $t>\gamma$, we have
  $t-\lambda>0$, therefore $G(\lambda,n)>0$ since
  $\rho_n(\gamma,\infty)\ne0$ as a consequence of
  Lemma~\ref{lem:isolated-eigenvalue}. This contradicts
  (\ref{eq:g-noncommon-eigenvalue}) with $\mu=\lambda$. Thus, there is
  no eigenvalue of $\widetilde{J}_n$ which is not an eigenvalue of $J$
  in $\mathcal{A}$. Moreover, $G(\lambda,n)>0$ also contradicts
  (\ref{eq:g-common-eigenvalue}) with $\eta=\lambda$, hence there are
  no common eigenvalues in $\mathcal{A}$.

  Now, let us see what happens in $\mathcal{B}=(\gamma,b)$. We first
  treat the case when $\rho_n(\{\gamma\})=0$. Then,
  Lemma~\ref{lem:isolated-eigenvalue} and (ii) implies that
  $\rho_n(-\infty,b)=0$. By
  Theorem~\ref{thm:analysis-borel-transform}, $G(\lambda,n)$ is
  continuous and strictly increasing. Moreover, reasoning as at the
  beginning of this proof, $G(\lambda,n)$ is positive. On the other hand,
  Theorem~\ref{thm:analysis-borel-transform} implies
  \begin{equation*}
    \lim_{\lambda\uparrow b}G(\lambda,n)=+\infty\,.
  \end{equation*}
  Define
  \begin{equation*}
    f(\lambda):=\frac{\theta^2}{(\theta^2-1)(\gamma-\lambda)}\,.
  \end{equation*}
 By our assumption on $\theta$, $f(\lambda)>0$ for
 $\lambda>\gamma$. Also,
 \begin{equation*}
   \lim_{\lambda\downarrow \gamma}f(\lambda)=+\infty
 \end{equation*}
 and $f$ is decreasing in $(\gamma,b)$.Thus, for $\lambda$ close to
 $\gamma$, one verifies that $G(\lambda,n)< f(\lambda)$ and for
 $\lambda$ close to $b$, that $G(\lambda,n)> f(\lambda)$. Therefore,
 there exists a unique $\lambda_0\in(\gamma,b)$ such that
 $G(\lambda_0,n)=f(\lambda_0)$. This $\lambda_0$ is an eigenvalue of
 $\widetilde{J}_n$ which is not an eigenvalue of $J$. In $(-\infty,b)$
 there are no common eigenvalues since $G(\lambda,n)>0$ in this
 interval.

 Suppose now that $\rho_n(\{\gamma\})\ne 0$. By hypothesis, it follows
 from Lemma~\ref{lem:isolated-eigenvalue} that $\rho_n(\{b\})\ne
 0$. Thus, it holds that $\rho_n(\{\gamma\})\ne 0$, $\rho_n(\{b\})\ne
 0$, and $\rho_n(\gamma,b)=0$. By
 Theorem~\ref{thm:analysis-borel-transform}, the function
 $G(\lambda,n)$ is continuous, strictly increasing in $(\gamma,b)$ and
 \begin{equation*}
   \lim_{\lambda\downarrow \gamma}G(\lambda,n)=-\infty\qquad
   \lim_{\lambda\uparrow b}G(\lambda,n)=+\infty\,.
 \end{equation*}
 Hence there is a unique $\lambda_0\in(\gamma,b)$ such that
 $G(\lambda_0,n)=f(\lambda_0)$. This $\lambda_0$ is an eigenvalue of
 $\widetilde{J}_n$ which is not an eigenvalue of $J$. Since there is
 also exactly one $\lambda_1$ in $(\gamma,b)$ such that
 $G(\lambda_1,n)=0$, one may have a common eigenvalue in this interval.
\end{proof}
\begin{remark}
  \label{rem:theta-greater}
  When $\theta>1$, the assertions of Theorems~\ref{thm:interlacing}
  and \ref{thm:interlacing-infinity}, modified by interchanging $J$
  and $\widetilde{J}_n$, hold true. The proof is carried out in the
  same way, but using (\ref{eq:master2}) instead of (\ref{eq:master})
  and $\widetilde{G}(z,n)$ instead of $G(z,n)$
\end{remark}
\begin{remark}
  \label{rem:discrete}
  If $n=1$ and $\sigma_{ess}(J)=\emptyset$, then
  Theorems~\ref{thm:interlacing}, \ref{thm:interlacing-infinity}, and
  Proposition~\ref{thm:gamma-in-both} are Propositions~3.1 and 3.2 of
  \cite{MR2998707}. Thus, in this case, a complete description of the
  interplay of the spectra is obtained.
\end{remark}
\begin{remark}
  \label{rem:finite}
  Note that the validity of our results includes the case of finite
  dimensional Jacobi matrices. In this particular case, the
  results of this work coincide with the corresponding ones in
  \cite{MR2915295}.
\end{remark}
\appendix
We give a simple proof of the following known results for Nevanlinna
functions.
\begin{theorem}
  \label{thm:analysis-borel-transform}
  Let $\rho$ be a positive measure on $\reals$ such that
  \begin{enumerate}[(i)]
  \item $\rho(\reals)<\infty$
  \item $\rho(a,b)=0$ for an open interval $(a,b)$
  \item If $a\ne -\infty$, then $\rho(\{a\})\ne 0$, and if $b\ne
    +\infty$, then  $\rho(\{b\})\ne 0$.
  \end{enumerate}
  Define, for $\lambda\in\reals$,
  \begin{equation*}
    F(\lambda):=\int_\reals\frac{d\rho(t)}{t-\lambda}\,.
  \end{equation*}
 Then $F$ has the following properties
 \begin{enumerate}[(I)]
 \item $F$ is continuous in $(a,b)$
 \item $F$ is strictly increasing in  $(a,b)$
 \item If $a\ne -\infty$, then
   \begin{equation*}
     \lim_{\lambda\downarrow a}F(\lambda)=-\infty
   \end{equation*}
   If  $b\ne +\infty$, then
   \begin{equation*}
     \lim_{\lambda\uparrow b}F(\lambda)=+\infty
   \end{equation*}
 \end{enumerate}
\end{theorem}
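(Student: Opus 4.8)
The plan is to exploit the single structural fact that condition (ii) provides: since $\rho(a,b)=0$, the measure lives entirely on $(-\infty,a]\cup[b,\infty)$, so for every $\lambda\in(a,b)$ and every $t$ in the support one has $t-\lambda\ne 0$. Quantitatively, on a compact subinterval $[c,d]\subset(a,b)$ the distance from $\lambda$ to the support is at least $\delta:=\min\{c-a,\,b-d\}>0$, whence $\abs{t-\lambda}^{-1}\le\delta^{-1}$ uniformly in $t$ and in $\lambda\in[c,d]$. Combined with (i), this bound is $\rho$-integrable, and the first two properties then follow from standard arguments. For (I), I would fix $[c,d]\subset(a,b)$, observe that $t\mapsto (t-\lambda)^{-1}$ is continuous in $\lambda$ for each fixed $t$ and is dominated by the integrable constant $\delta^{-1}$, and invoke dominated convergence to pass the limit inside the integral.

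For (II) I prefer the algebraic route, which sidesteps any justification of differentiation under the integral sign: for $\lambda_1<\lambda_2$ in $(a,b)$,
\begin{equation*}
  F(\lambda_2)-F(\lambda_1)=(\lambda_2-\lambda_1)\int_\reals\frac{d\rho(t)}{(t-\lambda_1)(t-\lambda_2)}\,.
\end{equation*}
Every $t$ in the support lies outside $(\lambda_1,\lambda_2)$, so the two factors $t-\lambda_1$ and $t-\lambda_2$ share the same sign and the integrand is strictly positive; since condition (iii) guarantees that $\rho$ carries an atom at whichever of $a,b$ is finite and is therefore not the zero measure, the integral is strictly positive and $F(\lambda_2)>F(\lambda_1)$.

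The heart of the matter, and the step I expect to require the most care, is (III). Treating the case $b\ne+\infty$, I would split
\begin{equation*}
  F(\lambda)=\int_{(-\infty,a]}\frac{d\rho(t)}{t-\lambda}+\frac{\rho(\{b\})}{b-\lambda}+\int_{(b,\infty)}\frac{d\rho(t)}{t-\lambda}\,.
\end{equation*}
The middle term tends to $+\infty$ as $\lambda\uparrow b$, since $\rho(\{b\})>0$ by (iii) and $b-\lambda\downarrow 0^+$. The delicate point is that the two remaining integrals need not stay finite in the limit, so I would control only their \emph{sign and boundedness}: for $\lambda\in(\tfrac{a+b}{2},b)$ the first integrand satisfies $\abs{t-\lambda}\ge\tfrac{b-a}{2}$, so that term stays bounded, while the last integral is nonnegative for $\lambda<b$. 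Hence the blow-up of the atom term is not cancelled, and $F(\lambda)\to+\infty$.

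The limit at $a$ (when $a\ne-\infty$) is symmetric: isolating $\rho(\{a\})/(a-\lambda)$, which tends to $-\infty$ as $\lambda\downarrow a$ because $a-\lambda\uparrow 0^-$ and $\rho(\{a\})>0$, while the integral over $(-\infty,a)$ is nonpositive and the integral over $[b,\infty)$ stays bounded, forces $F(\lambda)\to-\infty$. The only subtlety worth flagging is that the atoms supplied by condition (iii) are precisely the mechanism producing the infinite one-sided limits; everywhere else the finiteness of $\rho$ and the separation of its support from $(a,b)$ do all the work.
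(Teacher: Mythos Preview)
Your argument is correct and follows the same blueprint as the paper's: dominated convergence for regularity on $(a,b)$, the difference-quotient identity for monotonicity, and isolation of the endpoint atom to force the blow-up in (III). The only packaging difference is that the paper passes to the limit in the difference quotient to obtain $F'(x)=\int_{\reals\setminus(a,b)}(t-x)^{-2}\,d\rho(t)>0$, thereby establishing (I) and (II) in a single stroke, whereas you treat continuity and strict monotonicity separately.
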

\begin{proof}
  By the definition of $F$, one has
  \begin{align*}
    \frac{F(x+h)-F(x)}{h}&=\int_\reals\frac{1}{h}
    \left(\frac{1}{t-(x+h)}-\frac{1}{t-x}\right)d\rho(t)\\
    &=\int_{\reals\setminus(a,b)}\frac{d\rho(t)}{(t-(x+h))(t-x)}\,.
  \end{align*}
 Let $x\in\mathcal{E}\subset(a,b)$, where $\mathcal{E}$ is a closed
  interval. Then, for $t\in\reals\setminus(a,b)$, there exists $d>0$
  such that $\abs{t-x}>d>0$. Choose $h\in\reals$ such that
  $\abs{h}<\frac{d}{2}$. Then
  \begin{equation*}
    \abs{t-xh}>\abs{t-x}-\abs{h}\ge\frac{d}{2}>0\,.
  \end{equation*}
 Therefore
 \begin{equation*}
   \abs{\frac{1}{(t-x-h)(t-x)}}\le \frac{2}{d^2}\in L^1(\reals,d\rho)
 \end{equation*}
since $\rho$ is finite. Thus, one can apply the dominated convergence
theorem to obtain that
\begin{equation*}
  \lim_{h\to
    0}\frac{F(x+h)-F(x)}{h}=\int_{\reals\setminus(a,b)}\frac{d\rho(t)}{(t-x)^2}>0.
\end{equation*}
This proves (I) and (II).

Now, let $\{b_n\}_{n=1}^\infty$ be a nondecreasing real sequence such
that $b_n\convergesto{n} b$. Then, for $n$ sufficiently large,
\begin{equation*}
  F(b_n)=\left(\int_{(-\infty,a]}+\int_{[b,+\infty)}\right)\frac{d\rho(t)}{t-b_n}\,.
\end{equation*}
For the last term of the r.\,h.\,s., one has
\begin{equation}
  \label{eq:appendix-first-int}
  \int_{[b,+\infty)}\frac{d\rho(t)}{t-b_n}=\frac{\rho(\{b\})}{b-b_n}+\int_{(b,+\infty)}\frac{d\rho(t)}{t-b_n}\convergesto{n}\infty\,.
\end{equation}
On the other hand, for $n$ sufficiently large
\begin{equation}
 \label{eq:appendix-second-int}
  \abs{\int_{(-\infty,a]}\frac{d\rho(t)}{t-b_n}}\le
  \int_{(-\infty,a]}\frac{d\rho(t)}{\abs{t-b_n}}
 \le C\int_\reals d\rho(t)<\infty
\end{equation}
since $\{b_n\}_{n=1}^\infty$ accumulates at $b$ and $t\in(-\infty,a]$.

It follows from (\ref{eq:appendix-first-int}) and
(\ref{eq:appendix-second-int}) that $F(b_n)\convergesto{n}\infty$ and
therefore $F(\lambda)$ tends to $+\infty$ whenever $\lambda\to b$. A similar
argument proves that $F(\lambda)\to -\infty$ if $\lambda\to a$.
\end{proof}
\begin{remark}
  \label{rem:teschl2}
 $F$ is not only continuous but holomorphic away of the
  support $\rho$ (see the paragraph after \cite[Lem.\,B.4]{MR1711536}).
  \end{remark}
\begin{corollary}
  \label{cor:borel-transform}
  Let $F$ be as in Theorem~\ref{thm:analysis-borel-transform}. If
  $a\ne-\infty$ and $b\ne+\infty$, then there
  exists exactly one point $p\in(a,b)$ such that $F(p)=0$.
\end{corollary}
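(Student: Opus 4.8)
The plan is to read off all three conclusions of Theorem~\ref{thm:analysis-borel-transform} and combine them through the intermediate value theorem together with strict monotonicity. The key observation that unlocks everything is that the corollary's hypothesis $a\neq-\infty$ and $b\neq+\infty$ is precisely what makes hypothesis (iii) of the theorem active at \emph{both} endpoints, so that we have access to \emph{both} boundary limits in conclusion (III), not merely one of them.

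Concretely, I would proceed as follows. First, by conclusion (III), since $a\neq-\infty$ we have $\lim_{\lambda\downarrow a}F(\lambda)=-\infty$ and, since $b\neq+\infty$, we have $\lim_{\lambda\uparrow b}F(\lambda)=+\infty$. From these two limits one can choose points $\lambda_1,\lambda_2$ with $a<\lambda_1<\lambda_2<b$ satisfying $F(\lambda_1)<0$ and $F(\lambda_2)>0$. Next, by conclusion (I), $F$ is continuous on the closed interval $[\lambda_1,\lambda_2]$, so the intermediate value theorem produces a point $p\in(\lambda_1,\lambda_2)\subset(a,b)$ with $F(p)=0$; this settles existence. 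For uniqueness, I would invoke conclusion (II): since $F$ is strictly increasing on $(a,b)$, it is in particular injective there, so the equation $F(\lambda)=0$ can have at most one solution in $(a,b)$. Combining existence and uniqueness gives exactly one such point $p$.

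I do not expect any genuine obstacle here, as the statement is a direct corollary of the properties already established in Theorem~\ref{thm:analysis-borel-transform}. The only point demanding a moment of care is the logical dependence on having finite endpoints: were one of $a,b$ infinite, the corresponding boundary limit in (III) would be unavailable and $F$ might fail to change sign, so the existence part would break down even though monotonicity (and hence uniqueness) would survive. This is why the hypothesis $a\neq-\infty,\ b\neq+\infty$ is stated explicitly, and I would make that dependence visible in the write-up.
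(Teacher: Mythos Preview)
Your proposal is correct and matches the paper's own proof, which simply states that the result follows directly from (I), (II), and (III) of Theorem~\ref{thm:analysis-borel-transform}. You have merely spelled out the intermediate-value-plus-monotonicity argument that the paper leaves implicit.
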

\begin{proof}
  The proof follows directly from (I), (II), (III) of
  Theorem~\ref{thm:analysis-borel-transform}.
\end{proof}
\begin{acknowledgments}
  We thank the anonymous referees for useful suggestions and comments
  that led in particular to Remarks \ref{rem:teschl} and
  \ref{rem:teschl2}.
\end{acknowledgments}
\def\cprime{$'$} \def\lfhook#1{\setbox0=\hbox{#1}{\ooalign{\hidewidth
  \lower1.5ex\hbox{'}\hidewidth\crcr\unhbox0}}} \def\cprime{$'$}

\end{document}